\newtheorem{proposition}{Proposition}
\newtheorem*{definition*}{Definition}
\title{A Model of Enclosures: \\Coordination, Conflict, and Efficiency in \\ the Transformation of Land Property Rights\footnote{We thank participants from the 2024 conferences at ThReD (MIT) and SIOE (Chicago), and seminar or conference participants at Fordham, Cornell, the IAEE, The World Bank, and at Hunter College and The Graduate Center at CUNY. Klaus Deininger and Jonathan Morduch shared helpful suggestions on earlier versions. An online appendix with proofs and code to replicate and interact with all figures in the text can be found at  \url{https://jhconning.github.io/enclosure_book}.} \\ }
\author{Matthew J. Baker and Jonathan Conning \\ Hunter College and the Graduate Center \\
City University of New York}
\begin{document}

\maketitle  
\begin{abstract}
Economists, historians, and social scientists have long debated how open-access areas, frontier regions, and customary landholding regimes came to be enclosed or otherwise transformed into private property. This paper analyzes decentralized enclosure processes using the theory of aggregative games, examining how population density, enclosure costs, potential productivity gains, and the broader physical, institutional, and policy environment jointly determine the property regime. Changes to any of these factors can lead to smooth or abrupt changes in equilibria that can result in inefficiently high, inefficiently low, or efficient levels of enclosure and associated technological transformation. Inefficient outcomes generally fall short of second-best. While policies to strengthen customary governance or compensate displaced stakeholders can realign incentives, addressing one market failure while neglecting others can worsen outcomes. Our analysis provides a unified framework for evaluating mechanisms emphasized in Neoclassical, Neo-institutional, and Marxian interpretations of historical enclosure processes and contemporary land formalization policies.
\end{abstract}   

\newpage

%\tableofcontents

\newpage
\onehalfspacing

\section{Introduction}

The emergence of more secure and exclusive private property rights to land from more fluid possession-based systems is widely touted as one of history's most consequential institutional shifts. Leading thinkers, from Adam Smith and Karl Marx to Douglas North, have identified this transformation as central to the emergence of capitalism and modern economic growth \citep{smith1982, marx1992, north1973, brenner1976}. Yet substantial debate persists around the causes, efficiency, and distributional consequences of the enclosure processes that led to modern forms of private ownership.\footnote{The term `enclosure' has had debated meanings within economic history, at times encompassing not just changes in property but also in production organization or technology use. Our own focus aligns with understanding how these phenomena are entangled.  We will clarify its specific meaning within our model.} 

These debates remain critically relevant today, as an estimated 2.5 billion people reliant on land for their livelihoods face insecurity stemming from unclear statutory recognition or inadequate protection for their access and ownership claims \citep[][p. 7]{pearce2016}. UN Habitat estimates that less than ten percent of rural and urban land plots are formally titled in poorer developing countries \citep[][p. 26]{augustinus2003}. Although the potential benefits of secure, state-recognized ownership are widely acknowledged, fundamental questions remain about the timing, methods, and consequences of land formalization efforts. Societies must weigh the expected benefits against the costs of establishing and enforcing new property regimes, as well as the potential for displacing existing customary tenure arrangements \citep{deininger2003,aldenwily2018}. 

In this paper, we develop a flexible framework to analyze decentralized and potentially policy-influenced processes that result in more exclusive property claims to land. We model these processes as an aggregative game \citep{acemoglu2013} in which potential claimants weigh the benefits of securing land rents and the returns of potential technological improvements against the costs of establishing exclusive rights. As enclosure occurs, labor allocates across remaining unenclosed areas or by working in the enclosed sector. Our analysis explores how changing economic, political, and policy environments alter incentives, ultimately influencing the evolution of property regimes and their associated economic and social outcomes.

Our two-layered model of strategic action of enclosure and labor allocation distinguishes our approach from traditional analyses of open-access and customary property regimes. Canonical models like \citet{weitzman1974} and recent macro-development models \citep[e.g.][]{chen2023} measure misallocations attributed to free access or customary property regimes compared to a private property alternative that is assumed to be costless to impose and maintain. This makes it unsurprising that customary regimes, where maintaining possession requires effort, appear inefficient.  Our approach is to instead explicitly incorporate the costs of establishing and maintaining private property rights into the analysis and to endogenize the transition to private property via decentralized decisions. This allows us to analyze different environments where customary regimes might persist or coexist with more formal land ownership.  

Our framework is inspired by \citet{demeza1992} but we extend the framework to a more general class of aggregative games that allow us to examine how equilibrium outcomes respond to changes in a wide range of economic environments and policy interventions. Factors such as population density, potential productivity gains from enclosure, initial patterns of possession and resistance, and enclosure costs interact to shape land privatization outcomes. Changes in these factors can trigger gradual or sudden shifts in equilibria, with varying but determinate impacts on social efficiency and distribution. 

While the open-access case provides a useful analytical benchmark, it rarely reflects the complexities of real-world property regimes. As Elinor Ostrom and others have emphasized, communities typically regulate resource use through membership restrictions, social norms, and formal rules that mitigate the inefficiencies of pure open access \citep{ostrom1990,baland1996,bromley1992}. In addition, existing users often actively resist privatization and demand compensation for lost access rights. Our extended model incorporates these critical features - regulated access, political power dynamics, and compensation requirements - enabling analysis of diverse property regimes including customary tenure systems and contested claims.

We identify three key externalities that shape the efficiency of decentralized enclosure decisions. These arise endogenously from the interplay of individual enclosure decisions and the responses of other agents to shape whether these decisions lead to socially efficient outcomes. These externalities are the familiar rent-dissipating externality associated with unregulated access, positive spillovers from privatization-induced productivity improvements, and negative externalities that arise when enclosers capture rents from displaced users without compensation. The relative strength of these externalities determines the strategic nature of enclosure decisions. When positive spillovers dominate, enclosure decisions become strategic substitutes, dampening private incentives for socially efficient property rights transformation and technological change. Conversely, when negative externalities prevail, enclosure decisions become strategic complements, leading to the possibility of premature privatization and wasteful enforcement expenditure. We show that policies to improve governance and regulation of the commons and policies to compensate displaced users can realign incentives, but addressing one of these market failures without addressing the others can worsen equilibrium outcomes.

The remainder of this paper proceeds as follows. After reviewing related literature in Section \ref{relatedlit}, Section \ref{themodel} presents our baseline model and establishes efficiency benchmarks. Section \ref{efficiency} analyzes decentralized enclosure processes and the gap that may develop between private and social objectives. We establish a planner's second-best benchmark and explain why resource allocation under decentralized enclosure processes fails to achieve it. Section \ref{extended_model} incorporates regulated access and compensation requirements, better reflecting real-world property regime transitions. Section \ref{sec-extensions} explores applications ranging from historical analyses of English enclosure movements to frontier and colonial land policies, as well as modern land formalization programs in developing countries. Section \ref{conclusion} concludes.

\section{Related Literature} \label{relatedlit}
Neoclassical economics highlights the crucial role of secure and exclusive private property rights for efficient resource allocation, but largely ignores the practical costs and political complexities of establishing and enforcing such rights.\footnote{As Abba Lerner wryly observed, ``Economics has gained the title Queen of the Social Sciences by choosing solved political problems as its domain.'' \citep[p.259]{lerner1972}}  

Factors such as population density, commercialization, markets and technology, and state capacity affect these trade-offs, but interpretations differ on how changes to any of these factors might change individual incentives to induce institutional shifts in each of these theoretical approaches. Four key dimensions are useful for comparison: (1) whether change tends to be evolutionary or conflict-driven; (2) whether it tends to improve efficiency and raise productivity or whether inefficiency can arise and persist; (3) whether it leads to labor absorption or displacement and mitigates or increases inequality; and (4) whether change arises via decentralized processes, from below, or tends to be imposed from above.  One way of characterizing previous schools of thought on transformation of ownership is the emphasis and viewpoint - positive or negative - they have about each of these aspects of transformation. 

The New Institutional Economics literature, along with Austrian and Marxian perspectives, provides richer analytical narratives for analyzing the origin and transformation of land property rights, though with differing emphases and conclusions. Although a comprehensive review is beyond our scope,\footnote{An eclectic but very incomplete list of opinionated surveys and books of mostly economic writing on the topic might include \citet{binswanger1995, baland1996, besley2010, ostrom1990, libecap1993, anderson2004, bardhan1989, byres1996, murtazashvili2016a,desoto2000, otsuka2014}.} these traditions highlight that in many contexts the benefit of establishing and enforcing more exclusive property rights may entail costs that exceed the benefits.

Austrian economists \citep[e.g.][]{murtazashvili2016a} and the first New Institutionalists \citep{alchian1973,demsetz1967, north1973}, together with influential observers like \citet{boserup1965} and \citet{ostrom1990}, emphasize evolutionary adaptation to local circumstances, through individual entrepreneurship, community negotiation, and Coasian bargaining processes \citep{coase1960}. They argue that institutional adaptations emerge organically in response to changing economic conditions, tending toward efficiency, though this gradual evolution may be distorted and disrupted by external forces like state intervention or rapid technological change.

Later New Institutionalists including \citet{north1990}, \citet{binswanger1995}  and \citet{acemoglu2005}, place more emphasis on power relations, distributional conflicts, and coordination failures in shaping institutional change. They argue that transaction costs often stand in the way of institutional change and that conquerors and elites impose institutions to serve their interests, often leading to persistent societal inefficiency. This perspective provides a lens through which to examine traditional justifications for state intervention in land registration, titling, or land reform.  

Historically, interventions, particularly during colonial expansion and state formation, were often justified by claiming that customary or indigenous land tenure systems were inefficient. John Locke's theory of property, arguing that individuals gain land rights through labor and improvement, was often invoked to justify enclosure and colonization. Although Locke's arguments were more nuanced, others often stretched his theory, particularly his characterization of unimproved land as `waste', to provide justification for land takings and displacement of customary users \citep{greer2018, barbier2010}. Whether such interventions, often imposed from above, actually delivered their promised improvements in different contexts or primarily served as a pretext for expropriation by more powerful actors remains debated \citep{platteau1996,borrasjr2012,deininger2003}.

`Political Marxists' like Robert Brenner (\citeyear{brenner1976}) and Ellen Wood (\citeyear{wood2002}) offer a yet more conflict-driven perspective, emphasizing how shifts in the balance of class power served as a primary driver of institutional change. They argue that European customary property regimes, while granting peasants possession rights, were inherently inefficient and constrained by feudal legacies and class conflicts. This severely limited their ability to adapt and produce growth. They argue that the transition to capitalism required a shift in class power that enabled landlords to expropriate the peasantry via enclosures and other means. This created a landless, wage-dependent workforce and the conditions for the development of capitalist incentives or imperatives that led to continuous increases in productivity, labor release, and structural transformation.\footnote{A large ongoing literature has challenged the empirical and theoretical basis for several of these assertions \citep{aston1987,hatcher2001,whittle2013}} 

Although the perspective of the Political Marxists, with its focus on class conflict and expropriation, may seem distinct, \citet[][p.4]{allen1992} identifies a shared assumption he labels `agrarian fundamentalism': the belief that customary regimes are inefficient obstacles to progress. He provides evidence that, at least for the English South Midlands areas that he studied, significant growth in agricultural productivity had already occurred under the common field system and customary tenure \textit{prior} to the British Parliamentary enclosures. Similar revisionist evidence has been presented in other contexts \citep[see e.g.][]{clark1998, kopsidis2015, berry1993, platteau1996, ostrom1990}, all suggesting that customary tenure arrangements were often more efficient and productive than assumed and often had already evolved to allow for individualized plots, inheritance, transfers, and market operations. Allen suggests that the very success of these customary regimes and the increasing value of land made them targets for enclosure, driven by rent-seeking motives rather than solely by the pursuit of greater productivity.\footnote{Considerable debate on the issues persists. In sections \ref{who_encloses} and \ref{manuf_sec} we discuss how our model may help frame some aspects of the English enclosure debates.} 

Recent macro-development papers work within similar neoclassical models but extend their critique to customary property regimes, arguing that the misallocation that arises can account for low agricultural productivity, delayed structural transformation, and a large part of measured TFP differences across countries \citep[e.g.][]{chen2017a, chen2023, gottlieb2019}. However, they too assume that the transition to a more efficient private property alternative is exogenously driven and costless.  Their models cannot address the possibilities we consider: that customary land property regimes might be efficient in some circumstances; nor can they explain why they have persisted or coexisted, or where and why endogenous decentralized institutional transformation is likely to take place or not. They also typically assume lump-sum transfers to bypass the distributional issues that can arise when new private property ownership displaces customary users, where we will trace out distributional pathways.

Recent macro-development models offer valuable tools for empirically measuring misallocation by extending the neoclassical critique. However, by assuming costless and exogenously driven transitions to private property, these models implicitly embrace agrarian fundamentalism, as the customary property regime necessarily appears less efficient in all environments \citep{chen2017a,chen2023, gottlieb2019}.  In contrast, our framework explicitly models the costs and strategic interactions involved in property rights transitions, allowing us to analyze the persistence and potential efficiency of customary regimes in different environments, as well as the distributional consequences of enclosure.

While there is no substitute for deep historical and anthropological research to understand the nuances of property regimes and their transformation, formal models can help clarify the key mechanisms driving their evolution and predict their outcomes.\footnote{As \citet[][p. 1]{hatcher2001} notes, ``what historians write is imbued with an awareness of theory and abstract concepts,'' and that  their interpretations often ``resonate with the influence of grand but conflicting models of the processes of long-term change and development."}   This paper contributes to this discourse by developing a tractable model of decentralized enclosure processes.  By incorporating the costs of establishing and maintaining property rights and allowing for strategic interactions between different agents, our framework helps identify the circumstances under which various theoretical predictions are likely to hold.
 
\section{Benchmark model} \label{themodel}
\subsection{Technology and resources}

We begin with a model of a purely agricultural economy producing a single output, say, blueberries. Production uses land and labor, with land potentially existing in one of two states: enclosed or unenclosed. Enclosed land has an owner (or owners) who can exclude others at a cost. For now unenclosed land is in open access; anyone can gather blueberries from any parcel of this land that they can occupy by possession.  Section \ref{extended_model} extends the model to allow for community-based regulation of access to unenclosed, or `common', land.

Production in the unenclosed (or customary/common) and enclosed sectors follows Cobb-Douglas technologies:
\begin{align*} 
\text{Unenclosed:}  \quad  A F(T_{c},L_{c}) &= A T_{c}^{1-\alpha} L_{c}^{\alpha} \\  
\text{Enclosed:}  \quad  \theta A F(T_{e},L_{e}) &= \theta A T_{e}^{1-\alpha} L_{e}^{\alpha}
\end{align*}\label{private_prod_fun}
where $T_{i}$ and $L_{i}$ denote land and labor in sector $i \in \{c,e\}$, $A$ is total factor productivity, and $\theta$ captures potential productivity differences between sectors. While we adopt Cobb-Douglas functions for tractability, our core comparative static results generalize to any concave, monotonic technology due to the properties of aggregative games.\footnote{Most results require only that players' payoff functions are continuous and concave in the overall enclosure rate \citep{acemoglu2013}.}

The parameter $\theta$ captures potential productivity gains from enclosure. When $\theta > 1$, enclosure enables technological or organizational improvements - for example, fenced plots may protect blueberries from trampling or animal foraging, or more secure property rights may incentivize investment in higher-yield varieties \citep{boserup1965}. When $\theta \leq 1$, enclosure offers no productivity advantage or may even reduce output, as when fencing reduces cultivable area. By modeling post-enclosure changes as homothetic transformations, we isolate how factors beyond technology drive the adoption of different production techniques on enclosed land.

The economy has fixed endowments of land ($\bar T$) and labor ($\bar L$), with population density defined as $\bar l = \frac{\bar L}{\bar T}$. Of these totals, if $T_e$ units of land and $L_e$ units of labor are employed in the enclosed sector, there will be $T_c=\bar T-T_e$ and $L_c=\bar L-L_e$ in the unenclosed sector. We denote the shares of enclosed land and labor as $t_e=\frac{T_e}{\bar T}$ and $l_e=\frac{L_e}{\bar L}$ respectively.

In the enclosed sector, competitive firms employ factors under constant returns to scale, paying market wages and land rents (in units of output). Labor freely moves between sectors until returns equalize: Workers can either earn market wages in the enclosed sector or directly produce output in the unenclosed sector. 

Before analyzing the decentralized equilibrium, in the next section, we first characterize the social planner's problem to determine when enclosure and associated technological improvements are optimal. This benchmark will allow us to evaluate the efficiency of outcomes that emerge from the strategic interaction between decentralized enclosers in subsequent sections. It will be useful to collect the key parameters describing the economic environment in the vector $\varphi =(A, \theta, \bar l, \alpha, c)$, where $A$ captures baseline productivity, $\theta$ measures potential gains from enclosure, $\bar l$ represents population density, $\alpha$ is the labor share in production, and $c$ denotes enclosure costs.

\subsection{First-best labor allocation and enclosure}\label{planner}

A social planner chooses both the share of land to enclose ($t_e$) and the share of labor to allocate to the enclosed sector ($l_e$) to maximize total output net of enclosure costs. With total land $\bar T$ and labor $\bar L$, this implies solving:
\begin{equation}\label{fbmaxxer}
\max_{t_e, l_e} \quad A  \left [ \theta F(t_e \bar T, l_e \bar L) + F((1-t_e)\bar T, (1-l_e)\bar L) \right ] - c t_e \bar T
\end{equation}
where $c$ represents the per-unit cost of enclosure.\footnote{Following \citet{demeza1992}, we assume linear enclosure costs: at cost $c$ per unit of land, an owner can establish and enforce exclusive property rights. For now, we also follow the canonical literature \citep{weitzman1974,samuelson1974,cohen1975} in assuming enclosers are outsiders and existing users are displaced without compensation.} Under Cobb-Douglas production,  we can write $\theta A F(T_e, L_e)=\theta A F(t_e,l_e)\cdot F(\bar T, \bar L)$ and express this objective more compactly as:

\begin{equation} \label{fbmaxxer2}
\max_{t_e, l_e} \quad \left [ \theta F(t_e, l_e) + F(1-t_e, 1-l_e) \right ] \cdot A \bar l^\alpha - c \cdot t_e
\end{equation}
where $A \bar l^\alpha = {AF(\bar T, \bar L)}/{\bar T}$ is potential output per unit land using the non-augmented technology. A necessary condition for efficient labor allocation is that the marginal product of labor be equalized across plots and sectors, as can be seen from differentiation of (\ref{fbmaxxer2}) with respect to $l_e$. We can express the condition $MP_L^e=MP_L^c$ in terms of land and labor shares allocated to the enclosed sector as follows: 

\begin{align} \label{opt_labor} 
    \theta A \cdot F_L(T_e,L_e) &= A \cdot F_L(\bar T-T_e,\bar L-L_e)
\end{align}
For our parameterization, (\ref{opt_labor}) reduces to:

\begin{equation} \label{mplfoc}
  \theta \alpha A \left ( \frac{t_e}{l_e} \right ) ^{1-\alpha} = \alpha A \left ( \frac{1-t_e}{1-l_e} \right ) ^{1-\alpha}
\end{equation}
This equation can be solved to find the \textit{planner's first-best labor allocation function}. This tells us, for any initial share of enclosed land $t_e$ (which may not be the efficient choice), the share of labor $l_e$ the planner would allocate to the enclosed sector to maximize total output:

\begin{equation} \label{fblab}
l_e^1(t_e) = \frac{\Lambda_1 t_e}{1+(\Lambda_1-1)t_e}, \quad \text{where }\Lambda_1=\theta^{\frac{1}{1-\alpha}}.
\end{equation}
 
This labor share function depends only on the relative productivity parameter $\theta$ and not on the total factor productivity level $A$. When $\theta>1$, we must have $\Lambda_1>1$ in (\ref{fblab}) which implies that $l_e^1(t_e) \ge t_e$ for all $t_e \in [0,1]$ (with strict inequality for interior points). Thus, at any interior enclosure rate $t_e$, the planner employs \textit{higher} (or never lower) labor intensity on enclosed land compared to unenclosed land when $\theta>1$, so $\frac{L_e}{T_e}\ge\frac{\bar L}{\bar T}\ge\frac{L_c}{T_c}$. This labor \textit{augmentation} contrasts with the labor allocation reaction functions under decentralized enclosure situations studied below, where private enclosure leads to labor \textit{displacement} on newly enclosed plots unless the expected productivity gain $\theta$ is above a high threshold $\theta_H$. 

To facilitate comparison with decentralized outcomes, we reformulate the planner's maximization problem in two steps. Working backwards, the planner first determines the optimal labor intensity function $l_e^1(t_e)$ for any enclosure rate $t_{e}$. This optimal labor share function (given by (\ref{fblab})) is then incorporated into the objective to reformulate the problem in terms of the choice of the enclosure rate $t_{e}$, which solves:

\begin{equation}\label{zte}
\max_{t_e} \quad z_1(t_e) - c \cdot t_e
\end{equation}
where
\begin{equation*}
z_1(t_e) = \left [ \theta F(t_e, l_e^1(t_e)) + F(1-t_e, 1-l_e^1(t_e)) \right ] \cdot A \bar l^\alpha 
\end{equation*}
which, using the functional form in (\ref{fblab}) can be simplified to:
\begin{equation*} 
\ z_1(t_e) = \left[ 1+\left(\Lambda_1-1\right)t_e\right]^{1-\alpha} \cdot A \bar l^{\alpha} 
\end{equation*}

Enclosure is costly, so it will only be socially optimal when it enables higher productivity technology (that is, where $\theta>1$) and effective population pressure $A \bar l^\alpha$ is sufficiently high relative to enclosure costs $ct_e$. Since, by assumption, the planner ensures efficient labor allocation across sectors (the planner perfectly regulates access to the not yet enclosed or improved areas), the enclosure decision reduces to a straightforward technology adoption decision: enclose land if and only if the output gains from technological improvement $\theta$ and associated optimal labor reallocation exceed the marginal cost $c$. This leads to our first result:

\begin{proposition}[First-Best Enclosure]\label{firstbest} When $\Lambda_1>1$, the planner chooses enclosure rate $t_e^1$ as follows:
\begin{itemize}
    \item  \textbf{No Enclosure:} $t_e^1=0$, when $z_1'(0)\leq c$.   
    \item \textbf{Full Enclosure:} $t^1_{e}=1$, when $z_1'(1) \geq c$.
    \item \textbf{Partial Enclosure:} $t_e^1 \in (0, 1 )$, when $z_1'(0)>c,$ $z_1'(1)<c$, and $z_1'(t_e^1)=c$.
\end{itemize} 
When $\Lambda_1<1$, the planner always chooses no enclosure.

\end{proposition}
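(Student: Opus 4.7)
The plan is to work from the reformulated single-variable problem in (\ref{zte}), namely $\max_{t_e \in [0,1]} \bigl[z_1(t_e) - c\,t_e\bigr]$, which is legitimate because $l_e^1(t_e)$ was derived as the interior FOC for $l_e$ and the equal-marginal-product condition (\ref{mplfoc}) holds globally for any $t_e$; an envelope argument then reduces the two-variable problem to a one-variable problem in $t_e$. With the closed form $z_1(t_e) = [1+(\Lambda_1-1)t_e]^{1-\alpha}\cdot A\bar l^{\alpha}$ in hand, everything reduces to a standard concave-programming exercise on the unit interval.

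The key analytical step is to establish the shape of $z_1$. A direct differentiation gives
\[
z_1'(t_e) = (1-\alpha)(\Lambda_1-1)\,[1+(\Lambda_1-1)t_e]^{-\alpha}\cdot A\bar l^{\alpha},
\]
and a second differentiation yields $z_1''(t_e) = -\alpha(1-\alpha)(\Lambda_1-1)^2\,[1+(\Lambda_1-1)t_e]^{-1-\alpha}\cdot A\bar l^{\alpha} \leq 0$, with strict inequality whenever $\Lambda_1 \neq 1$. Hence when $\Lambda_1 > 1$, $z_1$ is strictly increasing and strictly concave on $[0,1]$, and the objective $z_1(t_e) - c\,t_e$ is strictly concave with a strictly decreasing marginal net benefit $z_1'(t_e)-c$.

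From concavity, the three cases follow from the standard Karush--Kuhn--Tucker conditions on $[0,1]$. If $z_1'(0)\leq c$, then $z_1'(t_e) - c$ is already nonpositive at the left boundary and remains so for all larger $t_e$, so the corner $t_e^1=0$ is optimal. If $z_1'(1)\geq c$, then $z_1'(t_e)-c \geq 0$ throughout $[0,1]$ by monotonicity of $z_1'$, so the corner $t_e^1=1$ is optimal. In the intermediate case $z_1'(0)>c>z_1'(1)$, strict monotonicity of $z_1'$ and the intermediate value theorem yield a unique interior $t_e^1\in(0,1)$ satisfying $z_1'(t_e^1)=c$, and strict concavity guarantees this stationary point is the global maximum.

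Finally, when $\Lambda_1<1$ we have $\Lambda_1-1<0$, so the formula above gives $z_1'(t_e)<0$ for all $t_e\in[0,1]$. Thus both $z_1$ and $-c\,t_e$ are strictly decreasing in $t_e$, and the objective is uniquely maximized at $t_e^1=0$. I do not foresee a real obstacle in the argument: the only subtle point is justifying the reduction from (\ref{fbmaxxer2}) to (\ref{zte}), which rests on the fact that $l_e^1(t_e)$ is interior for $t_e\in(0,1)$ and that equation (\ref{opt_labor}) is both necessary and sufficient for labor efficiency given the Inada-like behavior of $F_L$ under Cobb--Douglas; once this is in place the rest is a textbook concave optimization on a compact interval.
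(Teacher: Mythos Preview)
Your argument is correct and follows essentially the same route as the paper: establish strict concavity of $z_1$ when $\Lambda_1>1$ via direct differentiation, then compare $z_1'$ with $c$ at the endpoints to classify corner versus interior optima, and note that $z_1$ is strictly decreasing when $\Lambda_1<1$. Your version is more explicit (writing out $z_1'$ and $z_1''$, invoking KKT and the intermediate value theorem), but the underlying logic is identical to the paper's proof.
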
 
\begin{proof}[Proof:]
For $\alpha \in (0,1)$ and $\theta>1$ (equivalently  $\Lambda_1>1$), the planner's benefit function $z_1(t_e)$ in equation (\ref{zte}) is strictly concave ($z_1''(t_{e})<0$). Therefore, to determine whether land should remain unenclosed, partially enclosed, or fully enclosed, we only need to compare the marginal benefit of enclosure $z'_1(t_e)$ with its marginal cost $c$ at the endpoints, which determines whether $z_1(t_e)-ct_e$ is maximized at a corner or interior solution. When $\Lambda_1<1$, $z_1(t_e)$ is strictly decreasing in $t_e$, making zero enclosure optimal.
\end{proof}

Using Proposition \ref{firstbest}, we can map our model parameters $\varphi$ to optimal enclosure decisions. Differentiating $z_1(t_{e})-ct_e$ with respect to $t_e$ and rearranging defines a parametric region coinciding with the case in which `no enclosure' is socially optimal outlined by Proposition \ref{firstbest}. We can express this as a critical threshold for population density $\bar l$:

\begin{equation} \label{sw_noenc_rf}
z_1'(0) \leq c \quad \Leftrightarrow \quad    \bar l \le 
\left[
   \frac{1}{(1-\alpha)\left(\Lambda_1-1\right)} \cdot \frac{c}{A}
\right]^{\frac{1}{\alpha}}=\bar l_0^1(\theta)
\end{equation}

Similarly, the `full enclosure' condition requires $z_1'(1)\geq c$. which yields:

\begin{equation} \label{sw_fenc_rf}
z_1'(1)\geq c \quad \Leftrightarrow \quad \bar l \geq \Lambda_1 \cdot 
\bar l_0^1(\theta)
=\bar l^1_1(\theta)
\end{equation}

Partial enclosure is optimal for population densities between the no-enclosure threshold (\ref{sw_noenc_rf}) and the full enclosure threshold (\ref{sw_fenc_rf}).

\begin{figure}[htb!] 
\centering
\includegraphics[width=.9\textwidth]{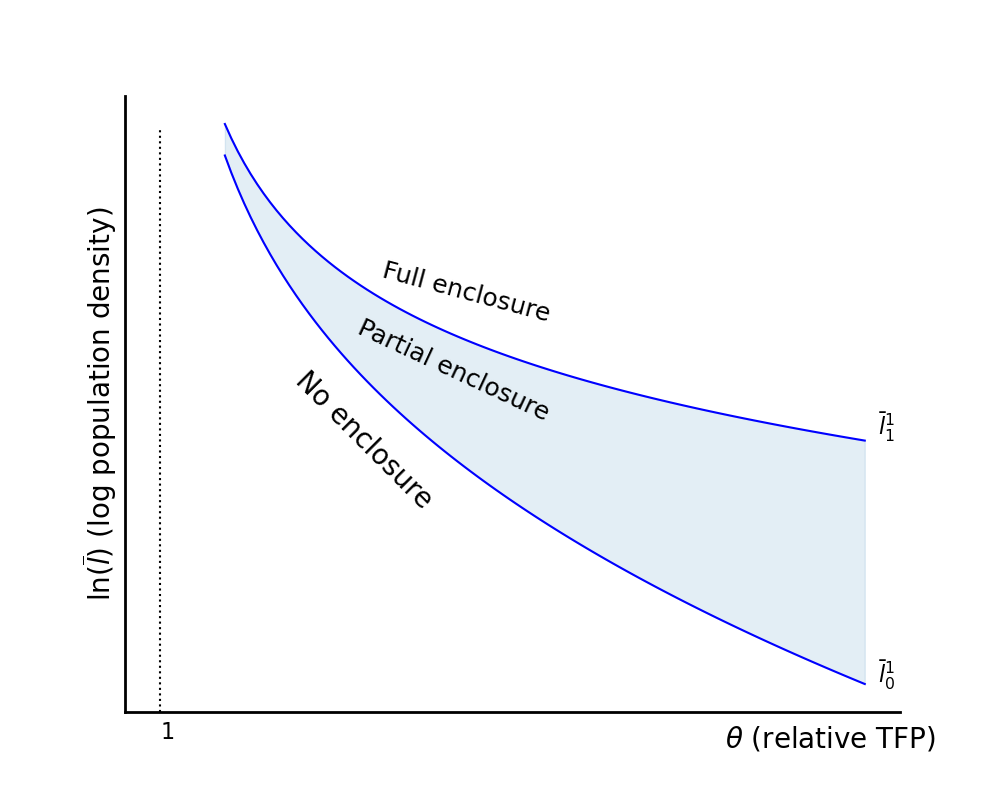}
\caption{Socially efficient enclosure decisions as they depend on (log) population density $\bar l$ and expected TFP gain from enclosure $\theta$.}
\label{fig-social}
\end{figure}  

Figure \ref{fig-social} is a graphical representation of the parametric description of Proposition \ref{firstbest}, and shows the loci $\bar l_0^1$ and $\bar l_1^1$ plotted as solid lines in ($\theta$, $\ln \bar l$) space.\footnote{We plot the natural logs of the loci, for better scaling.  This and all subsequent diagrams are drawn for parameters $c/A=1$ and $\alpha=2/3$.} The shaded parameter region between the two loci corresponds to situations where partial land enclosure is optimal. 

This analysis suggests a Boserupian interpretation of induced agricultural intensification: when land is abundant (low $\bar l$), enclosure benefits fall short of costs unless potential productivity gains $\theta$ are sufficiently high. Thus, while $\theta>1$ is necessary for enclosure to be optimal when $c>0$, it is not sufficient. Even with productivity-enhancing enclosure ($\theta>1$), the opportunity cost of drawing labor from the unenclosed sector plus enclosure costs may outweigh potential gains when land is abundant (low $\bar l$) or baseline productivity is high (high $A$).

Consider an economy initially below the $\bar l_0^1$ threshold where no enclosure is optimal. If population growth pushes density $\bar l$ above this threshold, the marginal social return to enclosure now exceeds its cost, leading the planner to enclose some land and operate it at higher labor intensity. However, since drawing labor from the unenclosed sector incurs increasing marginal costs, the planner at first encloses only enough land to efficiently accommodate the population increase.

\subsection{Decentralized enclosure processes}\label{market}

We now analyze how private actors' decentalized strategic interactions determine enclosure outcomes. Each unit of land has a potential encloser who can, at cost $c$, establish exclusive rights. Following \citet{weitzman1974} and \citet{demeza1992}, we initially assume enclosers do not compensate displaced users, though Section \ref{power} extends the model to settings where enclosers must pay compensation or overcome resistance.\footnote{More elaborate contest functions and strategic aspects of enforcing claims can be readily incorporated. See \citet{baker2003, baker2008, hafer2006}, and the discussion in Section \ref{policy_regulate}.}

Upon enclosure, each unit of land enters a competitive formal sector employing technology $\theta F(T,L)$. This sector pays competitive factor returns: land earns a rental rate $r$ equal to its marginal product $MP_T^e=\theta AF_T$, while labor earns a wage $w=MP_L^e$.\footnote{This formulation is equivalent to enclosers either directly hiring labor and earning profits or charging competitive access tolls to laborers (see \citet[][p. 230]{weitzman1974}).} 

Labor chooses between this market wage and establishing possessory rights on open access land. With symmetric ability to contest possession, the $L_c$ units of labor spread evenly across $T_c$ units of unenclosed land. Each worker thus operates $T_c/L_c$ units of land and earns the average product $AP_L^c = F(T_c, L_c)/L_c$ which, by Euler's theorem, decomposes into labor's marginal product plus implied rents from possessed land:

\begin{equation} \label{commondecomp}
AP_L^c = MP_L^c + MP_T^c \cdot \frac{T_c}{L_c}
\end{equation}
In equilibrium, labor must be indifferent between working in either sector:  

\begin{equation} \label{labmkteq2}
MP_L^c < MP_L^c + MP_T^c \cdot \frac{T_c}{L_c}  = w_e = MP_L^e  
\end{equation}
The lower marginal product of labor in the unenclosed sector, $MP_L^c<MP_L^e$, demonstrates the classic result of higher-than-efficient labor intensity, or `overcrowding,' on open access land also known as the `tragedy of the commons' \citep{dasgupta1979, richardcornes1996, baland1996}.
 
For our Cobb-Douglas specification, the equilibrium condition $AP_L^c=MP_L^e$ becomes:

\begin{equation} \label{MPLS}
    \theta \alpha A \left(\frac{t_e}{l_e}\right)^{1-\alpha}
    = A \left(\frac{1-t_e}{1-l_e}\right)^{1-\alpha}
\end{equation}
This equilibrium condition yields the \textit{labor reaction function} which determines the share of labor $l_e$ allocated to the enclosed sector for any enclosure rate $t_{e}$: 

\begin{equation} \label{optle0}
    l_e^0(t_e) = \frac{\Lambda_0 t_e}{1+(\Lambda_0-1) t_e }  \quad \text{where } \Lambda_0=(\alpha\theta)^\frac{1}{1-\alpha}
\end{equation}
The parameter \(\Lambda_0\) in the labor reaction function (\ref{optle0}) differs from $\Lambda_1$ in the planner's labor allocation function (\ref{fblab}), with $\Lambda_0 \le \Lambda_1$. Consequently, at any interior enclosure rate \(t_{e} \in (0,1)\), the decentralized equilibrium allocates strictly less labor to enclosed land than the planner would choose, so $l_e^0(t_e)<l_e^1(t_e)$. 

Unlike the planner's optimal labor allocation function (\ref{fblab}), which is always concave, the labor reaction function (\ref{optle0}) can be convex or concave. Specifically, when $\Lambda_0<1$, $l_e^0(t_e)$ is convex to the origin; otherwise it is concave. This qualitative difference proves critical for the model's behavior and can be characterized by a threshold value $\theta_H$:

\begin{definition*}[High-TFP Threshold] 
Let $\theta_{H}=\frac{1}{\alpha}$. Then:

    When $\theta\geq \theta_{H}$: \textbf{enclosure introduces high-TFP plot level gains} 

    When $\theta < \theta_{H}$: \textbf{enclosure introduces low-TFP plot level gains}
\end{definition*}
\vspace{.35cm}

Note that the low-TFP case includes situations where $\theta<1$, implying productivity losses from enclosure.

The private return to enclosure equals the marginal product of land, which depends on the labor reaction function:
\begin{align} \label{eqr_rf}
    r(t_e) &= {\theta} \cdot A  F_T(t_e \bar T, l_e^0(t_e) \bar L) 
\end{align}
Substituting the labor reaction function (\ref{optle0}) into the parametric form of (\ref{eqr_rf}) yields:

\begin{equation} \label{eq_rf_cf} 
    r(t_e)={\theta}\cdot(1-\alpha)\cdot  A \bar l^{\alpha} \cdot 
    \left ( \frac{\Lambda_0}{1+(\Lambda_0 -1)t_e} \right)^{\alpha}
\end{equation}
Atomistic potential enclosers compare this rental rate to the enclosure cost $c$, choosing to enclose when $r(t_e) > c$. In a Nash equilibrium, each agent makes a best-response enclosure decision taking others' enclosure actions (as well as labor reactions) as given. Since these actions affect payoffs only through the aggregate enclosure rate $t_e$, this process has the structure of an aggregative game \citep[see][]{corchon2021,acemoglu2013}. The nature of this game changes fundamentally depending on whether enclosure introduces high or low-TFP gains.

\begin{proposition}[Strategic Interactions in Enclosure Decisions]\label{compssubs} 
Let $\theta_{H}=\frac{1}{\alpha}$. When:

$\theta \geq \theta_H$ enclosure decisions are strategic substitutes; $r'(t_e) \leq 0$. 

$\theta < \theta_H$ enclosure decisions are strategic complements; $r'(t_e)>0$. 
\end{proposition}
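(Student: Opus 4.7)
The plan is to work directly from the closed-form rental rate expression in (\ref{eq_rf_cf}), since after substituting the labor reaction function the private return $r(t_e)$ is already written as a constant multiple of a single $t_e$-dependent factor. All the strategic-interaction content is therefore encoded in the monotonicity of that factor.

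First, I would isolate the only $t_e$-dependent piece of $r(t_e)$, namely
\[
g(t_e)=\left(\frac{\Lambda_0}{1+(\Lambda_0-1)t_e}\right)^{\alpha},
\]
and note that all prefactors ($\theta$, $(1-\alpha)$, $A$, $\bar l^{\alpha}$) are strictly positive. Thus $\operatorname{sign}\,r'(t_e)=\operatorname{sign}\,g'(t_e)$. A one-line differentiation gives
\[
g'(t_e)=-\alpha\left(\frac{\Lambda_0}{1+(\Lambda_0-1)t_e}\right)^{\alpha}\cdot\frac{\Lambda_0-1}{1+(\Lambda_0-1)t_e},
\]
so for $t_e\in[0,1]$ the denominator $1+(\Lambda_0-1)t_e$ is positive (it is a convex combination of $1$ and $\Lambda_0$, both positive) and $\operatorname{sign}\,g'(t_e)=-\operatorname{sign}(\Lambda_0-1)$.

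Second, I would translate the threshold on $\Lambda_0$ into a threshold on $\theta$. Since $\Lambda_0=(\alpha\theta)^{1/(1-\alpha)}$ and $\alpha\in(0,1)$, the map $\theta\mapsto\Lambda_0$ is strictly increasing, and $\Lambda_0=1\iff\alpha\theta=1\iff\theta=1/\alpha=\theta_H$. Hence $\theta\geq\theta_H\iff\Lambda_0\geq 1\iff r'(t_e)\leq 0$ (strategic substitutes), while $\theta<\theta_H\iff\Lambda_0<1\iff r'(t_e)>0$ (strategic complements). The boundary case $\theta=\theta_H$ gives $r'(t_e)\equiv 0$, which can be absorbed into either the substitutes or complements case as stated in the proposition's weak/strict inequalities.

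There is no real obstacle here: the heavy lifting has already been done in deriving the closed-form (\ref{eq_rf_cf}), so the proof is essentially a sign computation. The only thing worth being careful about is confirming that $1+(\Lambda_0-1)t_e>0$ on the relevant domain (which holds since $\Lambda_0>0$ implies both endpoints $1$ and $\Lambda_0$ are positive), so that no sign-flip occurs from the denominator. Optionally, I would also note the economic interpretation: $\Lambda_0<1$ corresponds to the convex labor-reaction regime, where additional enclosure draws in labor fast enough to raise the marginal product of land, explaining the strategic complementarity.
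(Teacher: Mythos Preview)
Your proof is correct and follows essentially the same approach as the paper's own proof: both work from the closed-form rental expression (\ref{eq_rf_cf}), observe that the sign of $r'(t_e)$ is governed by whether $\Lambda_0\gtrless 1$, and translate this into the threshold $\theta_H=1/\alpha$. You simply supply more explicit detail---the actual derivative of $g(t_e)$ and the check that $1+(\Lambda_0-1)t_e>0$ on $[0,1]$---where the paper leaves these steps implicit.
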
 
\begin{proof}[Proof:]
The rental rate function $r(t_{e})$ in (\ref{eq_rf_cf}) inherits its properties from the labor reaction function (\ref{optle0}).  We have $r'(t_e)<0$ when $\Lambda_0>1$ which requires $(\alpha\theta)^\frac{1}{1-\alpha}>1$, or equivalently, $\theta>\frac{1}{\alpha}$, coinciding with the definition of high-TFP gains to enclosure. Similarly, $r'(t_e)>0$ when $\Lambda_0<1$, which occurs when $\theta<\frac{1}{\alpha}$, in the low-TFP gains to enclosure region.
\end{proof}

This result reveals a fundamental dichotomy in decentralized enclosure processes. In high-TFP economies ($\theta>\theta_H$), enclosure decisions are strategic substitutes as the marginal return $r(t_e)$ decreases with aggregate enclosure. This leads to equilibrium behavior characterized in:

\begin{proposition}[Decentralized Enclosure with High-TFP gains]\label{hightfpenc} When $\theta>\theta_H$, pure strategy equilibria are characterized by: 
\begin{itemize}
    \item  \textbf{No enclosure:} $t_e^*=0$, when $r'(0)<c$.   
    \item \textbf{Full enclosure:} $t^*_{e}=1$, when $r'(1) \geq c$.
    \item \textbf{Partial enclosure:} $t_e^* \in (0, 1 )$, when $r'(0)>c,$ $r'(1)<c$, and $r'(t_e^{o})=c$.
\end{itemize} 
\end{proposition}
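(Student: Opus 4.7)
The plan is to exploit the continuity and strict monotonicity of $r(t_e)$ established in Proposition 2, together with the simple best-response rule faced by each atomistic encloser. I would first note that when $\theta > \theta_H$, the closed form in (\ref{eq_rf_cf}) is continuous on $[0,1]$, and because $\Lambda_0 > 1$ the rental function is strictly decreasing in $t_e$. Taking the aggregate enclosure rate $t_e$ as given, a potential encloser pays $c$ to collect a rent of $r(t_e)$, so her best response is to enter when $r(t_e) > c$, to stay out when $r(t_e) < c$, and she is indifferent when $r(t_e) = c$. A pure-strategy Nash equilibrium is therefore an aggregate rate $t_e^*$ consistent with this individual rule applied by a continuum of identical potential enclosers.

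Given strict monotonicity and continuity, the characterization follows directly from how $r$ stands relative to $c$ at the two endpoints of $[0,1]$. If $r(0) \leq c$, monotonicity gives $r(t_e) \leq c$ for all $t_e$, no one strictly prefers to enclose, and $t_e^* = 0$ is self-consistent. If $r(1) \geq c$, then $r(t_e) \geq c$ throughout, every encloser (weakly) prefers to enter, and $t_e^* = 1$ is self-consistent. Otherwise $r(0) > c > r(1)$, and the Intermediate Value Theorem yields an interior $t_e^* \in (0,1)$ satisfying $r(t_e^*) = c$; uniqueness comes from strict monotonicity. Strict monotonicity also guarantees that these three cases partition the parameter space exhaustively.

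The step requiring the most care, rather than a genuine obstacle, is verifying that the interior crossing is actually an equilibrium despite every encloser being indifferent at that point. Here I would invoke the standard aggregative-game argument: a measure-$t_e^*$ subset of the continuum encloses while the complement stays out, and the profile is self-consistent because each individual's net payoff from either action is zero when $r(t_e^*) = c$. Any deviation to a higher aggregate rate drives $r$ strictly below $c$ (inducing exit), and any deviation downward drives $r$ strictly above $c$ (inducing entry), which pins down $t_e^*$. The entire argument is a one-dimensional fixed-point problem made transparent by the monotonicity inherited from Proposition 2.
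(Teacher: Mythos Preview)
Your argument is correct and matches the paper's own approach: both rely on the strict monotonicity of $r(t_e)$ when $\Lambda_0>1$ and then compare the level of $r$ to $c$ at the endpoints (the paper additionally notes $r''(t_e)>0$, but convexity is not needed for the characterization). You also correctly read the conditions in the statement as $r(0)$, $r(1)$, and $r(t_e^*)$ versus $c$ rather than the typo $r'(\cdot)$, which is exactly how the paper's proof treats them.
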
 
\begin{proof}[Proof:]
Whenever $\Lambda_0>1$, differentiation of (\ref{optle0}) shows $r'(t_e)<0$, and $r''(t_e)>0$. The results follow from comparing $r(t_e)$ with $c$ at the boundaries and interior points.
\end{proof}

While this high-TFP case yields equilibria qualitatively similar to the first-best outcomes in Proposition \ref{firstbest}, the low-TFP case ($\theta < \theta_H$) produces starkly different results. Here, enclosure decisions become strategic complements as $r(t_e)$ increases with aggregate enclosure. As noted by \citet{vives2005} and \citet{acemoglu2013}, aggregative games with strategic complementarity can lead to multiple equilibria and coordination failures, as we show in the next Proposition. 

\begin{proposition}[Decentralized Enclosure with Low-TFP gains]\label{lowtfpenc} In the low-TFP gains region, equilibria are characterized by: 
\begin{itemize}
    \item  \textbf{No enclosure} is the unique dominant strategy equilibrium when $r(1)<c$.   
    \item \textbf{Full enclosure} is the unique dominant strategy equilibrium when $r(0)>c$.
    \item \textbf{Multiplicity} (either full or no enclosure) exist when $r(0)<c$ and $r(1)>c$.
\end{itemize} 
\end{proposition}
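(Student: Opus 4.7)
The proof leverages the monotonicity of the rental rate function $r(t_e)$ established in Proposition \ref{compssubs}: in the low-TFP region $\theta < \theta_H$, the function $r$ is strictly increasing on $[0,1]$. With this in hand, the three cases collapse to boundary comparisons between $c$ and the extreme values $r(0)$ and $r(1)$, combined with the observation that potential enclosers are atomistic and hence treat $t_e$ parametrically when choosing their best response.

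\textbf{Step 1 (set-up).} First I would cite Proposition \ref{compssubs} to conclude that in the low-TFP gains region $r'(t_e)>0$ throughout $[0,1]$, so $r(0) \le r(t_e) \le r(1)$ for every $t_e$. This converts the task into verifying best-response conditions at the boundaries.

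\textbf{Step 2 (unique corner equilibria).} In the case $r(1)<c$, monotonicity gives $r(t_e)<c$ for every $t_e$, so an atomistic encloser's payoff $r(t_e)-c$ from enclosing is strictly negative regardless of the actions taken by others. Hence \emph{not} enclosing is a strictly dominant strategy and the only equilibrium is $t_e^*=0$. The case $r(0)>c$ is symmetric: enclosing yields strictly positive payoff for every $t_e$, so enclosing is strictly dominant and $t_e^*=1$ is the unique equilibrium. This dominance argument is where the atomistic assumption bites — each agent's choice does not perceptibly move $t_e$, so the best-response comparison is against a fixed $r(t_e)$.

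\textbf{Step 3 (multiplicity).} When $r(0)<c<r(1)$, I would verify both corners directly as Nash equilibria. At $t_e=0$, the prevailing rental rate $r(0)<c$ makes enclosure unprofitable, so no agent wishes to deviate. At $t_e=1$, the rental rate $r(1)>c$ makes it strictly profitable to remain in the enclosed sector, so again no agent wishes to deviate. By the intermediate value theorem applied to the continuous, strictly increasing $r$, there is also a unique interior $t_e^{int}\in(0,1)$ solving $r(t_e^{int})=c$ at which all agents are indifferent; I would briefly note that this is a knife-edge equilibrium which is unstable under the strategic complementarity $r'(t_e)>0$ — any perturbation triggers a cascade to a corner — which is why the statement highlights only the two stable outcomes.

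\textbf{Main obstacle.} The substantive subtlety, rather than any technical one, is the atomistic-agent logic: dominance at the corners and coordination failure in the intermediate range follow not from a marginal first-order condition (as in Proposition \ref{hightfpenc}) but from the fact that an individual agent treats $r(t_e)$ as given. Making this point clean — and flagging the interior indifference root so the reader understands why the proposition lists only the two corner equilibria — is the main expository care the proof requires; the remaining algebra reduces to the monotonicity inherited from Proposition \ref{compssubs}.
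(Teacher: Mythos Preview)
Your proposal is correct and follows essentially the same approach as the paper: invoke the strict monotonicity of $r(t_e)$ from Proposition \ref{compssubs} and reduce everything to boundary comparisons with $c$, yielding dominance at the corners and multiplicity in the intermediate range. Your additional remarks on the atomistic best-response logic and the unstable interior root are helpful elaborations the paper only alludes to in a footnote, but the core argument is the same.
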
 
\begin{proof}
When there are low-TFP gains, $r(t_e)$ is strictly increasing in $t_e$. Thus, if $r(1)<c$, no encloser has an incentive to enclose even under the best circumstances, as $r(t_e)<r(1)<c$ for all $t_e$. Conversely, if $r(0)>c$, enclosure always pays off as $r(t_e)>r(0)>c$ for all $t_e$. When $r(0)<c<r(1)$, either equilibrium is possible.
\end{proof}

The contrast between Propositions \ref{hightfpenc} and \ref{lowtfpenc} reflects fundamentally different labor market dynamics. In low-TFP economies, enclosed land employs less labor than the same plot when unenclosed. This labor displacement lowers equilibrium wages, raising labor utilization and rental rates across all land, and the return to further enclosures. Conversely, in high-TFP economies, each enclosure raises labor demand enough to increase plot-level employment, drawing in labor and reducing crowding on unenclosed land. While this raises wages, it reduces returns to further enclosure.

We can characterize these equilibria in terms of population density thresholds using the two propositions \ref{hightfpenc} and \ref{lowtfpenc}. Substituting $t_e=0$ into (\ref{eqr_rf}) we find an expression for the threshold level of population density $\bar l$ above which enclosure becomes attractive ($r(0)>c$):
\begin{equation} \label{noenc_dec_rf} 
r(0) \geq c \quad \Leftrightarrow \quad \overline{l} \geq \frac{1}{\Lambda_0}
\left[\frac{c}{\theta A} \cdot
\frac{1}{(1-\alpha)}\right]^\frac{1}{\alpha}=\bar l_0^d
\end{equation} 

Similarly, the region where $r(1)\geq c$ holds defines a critical region where full enclosure will occur:

\begin{equation} \label{fullenc_dec_rf}
r(1) \geq c \quad \Leftrightarrow \quad \overline{l} \geq 
\left[\frac{c}{\theta A} \cdot \frac{1}{(1-\alpha)}\right]^\frac{1}{\alpha}=\bar l_1^d
\end{equation}

The relative position of these thresholds depends on whether TFP gains are high or low. Under low-TFP gains, strategic complementarity implies the $r(1)>c$ threshold lies above the $r(0)>c$ threshold, creating the possibility of multiple equilibria. These thresholds are plotted in Figure \ref{figure_private}, intersecting at $\theta_H=\frac{1}{\alpha}$ which separates the high and low-TFP regions.

\begin{figure}[htb!] 
\centering
\includegraphics[width=.9\textwidth]{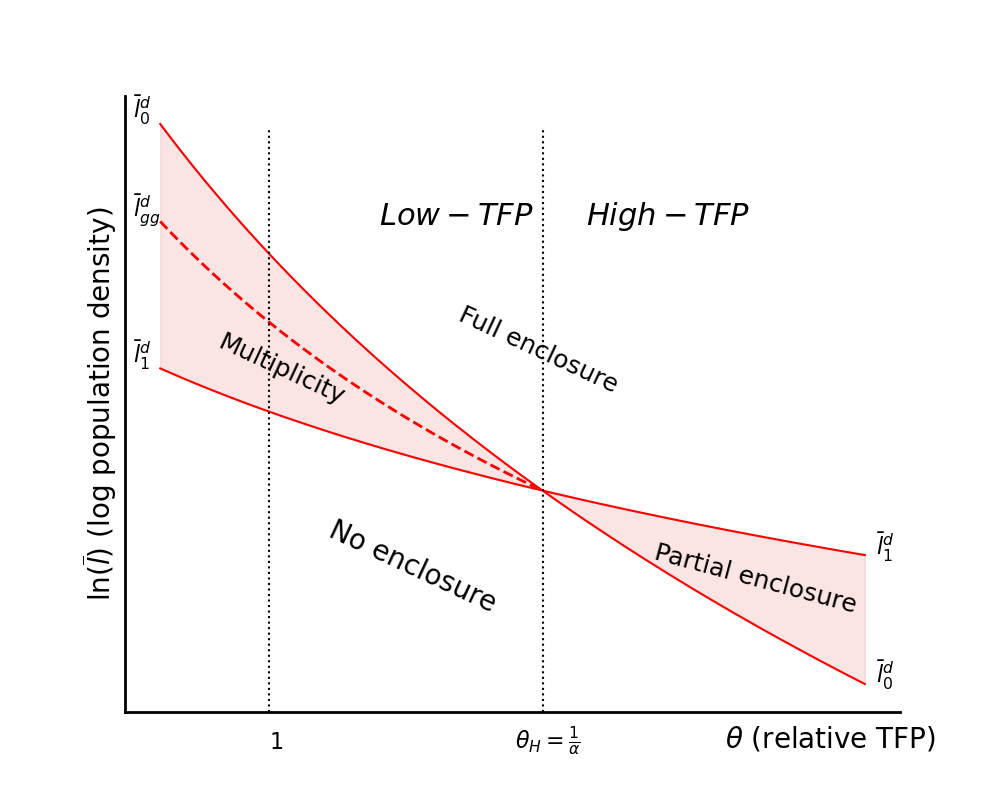}
\caption{Decentralized enclosure equilibria.}
\label{figure_private}
\end{figure}

In the high-TFP region between these thresholds, partial enclosure emerges as a stable equilibrium. However, in the low-TFP region, both `no enclosure' and `complete enclosure' can be Nash equilibria,\footnote{An unstable interior equilibrium with partial enclosure also exists.} as detailed in Proposition \ref{lowtfpenc}.  When no land is enclosed so $t_e=0$, it is privately unprofitable for any one claimant to enclose land, yet each would find it profitable to enclose if all others did as well. In such environments, small changes in beliefs about others' enclosure decisions could trigger rapid transitions between property regimes, leading to sudden changes in output and distribution. In this region, one might wish to deploy some sort of equilibrium refinement, as we now suggest.

\subsection{Multiple equilibria and global games}\label{section_globalgames}

The theory of Global Games \citep{carlsson1993, morris2003} providesd a natural framework for selecting between full and no enclosure equilibria in the multiple equilibria, low-TFP region, It is a convenient means for identifying risk-dominant enclosure equilibria. Consider introducing small uncertainty about the TFP parameter $\theta$. Potential enclosers observe a private signal  $x=\theta+\sigma \varepsilon_i$, where $\varepsilon$ has mean zero and finite support within the low-TFP region. As $\sigma$ approaches zero, parametric uncertainty about $\theta$ vanishes but strategic uncertainty remains.  As 
\citet{morris2003} shows, this leads to a unique cutoff strategy in Nash equilibrium: players enclose if and only if $\theta$ exceeds some threshold $\theta^*$, resulting in full enclosure whenever $\theta \geq \theta^*$.\footnote{Following \citet{morris2003} this  requires regions where each strategy is dominant. Proposition \ref{lowtfpenc} ensures these dominance criteria exist for certain values of $\theta$.} 

The cutoff $\theta^*$ is determined by the $\theta$ at which a potential claimant is indifferent between enclosing or not, given a uniform expectation about the fraction of others enclosing (i.e., a Laplacian belief about the enclosure decisions of others). Specifically,  $\theta^*$ (embedded in $\Lambda_0=(\alpha\theta^*)^\frac{1}{1-\alpha}$) solves:

\begin{equation*} \label{globalgameeqcond} 
    \int_0^1 \left[\Lambda_0\frac{(1-\alpha)}{\alpha} \cdot A \bar l^{\alpha} \cdot (1+\Lambda_0 t_e-t_e)^{-\alpha}-c\right]dt_e = 0
\end{equation*}

This condition can be simplified to show that full enclosure is the risk-dominant equilibrium when:

\begin{equation} \label{ggcond}
E[r(t_e)-c] = 0 \quad \Leftrightarrow \quad   \bar l \geq 
    \left[\frac{c}{\theta A} \cdot \frac{1-\Lambda_0}{\Lambda_0^\alpha - \Lambda_0}\right]^\frac{1}{\alpha}=\bar l_{gg}^d
\end{equation}
The locus is plotted as a dashed red line $\left ( \bar l_{gg}^d \right )$  in Figure \ref{figure_private}. In the low-TFP region, full enclosure emerges above this threshold, while no enclosure prevails below it. 

\subsection{Comparative statics}

The analysis and figures above help to describe how the equilibrium mix of property regimes -- the amount of land allocated to open-access or enclosed plots -- will change with key environmental parameters, including population density $\bar l$ and the potential for technological improvement after enclosure $\theta$. These variables appear on the vertical and horizontal axes of Figure \ref{fig-social} and all subsequent figures.  

The base-level total factor productivity parameter $A$ and the cost per unit land of enclosure $c$ also matter for enclosure decisions. These parameters enter as $c/A$ in all the boundary loci of enclosure decisions (expressions \ref{sw_noenc_rf}, \ref{sw_fenc_rf}, \ref{noenc_dec_rf}, \ref{fullenc_dec_rf}, and \ref{ggcond} above, and similar expressions below). An increase in $c$ generates an upward vertical displacement of all loci, making enclosure less likely, all else equal. Similarly, an increase in the base-level total factor productivity $A$ generates a downward vertical displacement, making enclosure more likely. We analyze the effect of additional policy and institutional variables in Section \ref{extended_model}.

\section{The Social efficiency of private enclosure decisions}\label{efficiency}

Having characterized both optimal and decentralized outcomes, we can now examine when decentralized enclosure processes lead to socially efficient labor allocations, why they sometimes fail, and their consequences for wages, rents, and distribution. This section analyzes these questions and establishes a second-best benchmark to guide policy interventions.

Figure \ref{fig_compare} superimposes the private enclosure decision regions of Figure \ref{figure_private} on the social planner decision regions of Figure \ref{fig-social}. This reveals where private and social enclosure decisions diverge and hence where decentralized equilibrium allocations $(t_e^*, l_e^0(t_e^*))$ fall short of potential output. In several parameter regions, decentralized enclosure produces efficient outcomes. For economies with low TFP gains and low population density (lower left region of Figure \ref{fig_compare}), the absence of private enclosure matches the social planner's choice. Similarly, for economies with high population density and high TFP gains (upper right region), decentralized processes efficiently achieve full enclosure.

\begin{figure}[htb!] 
\centering
\includegraphics[width=.9\textwidth]{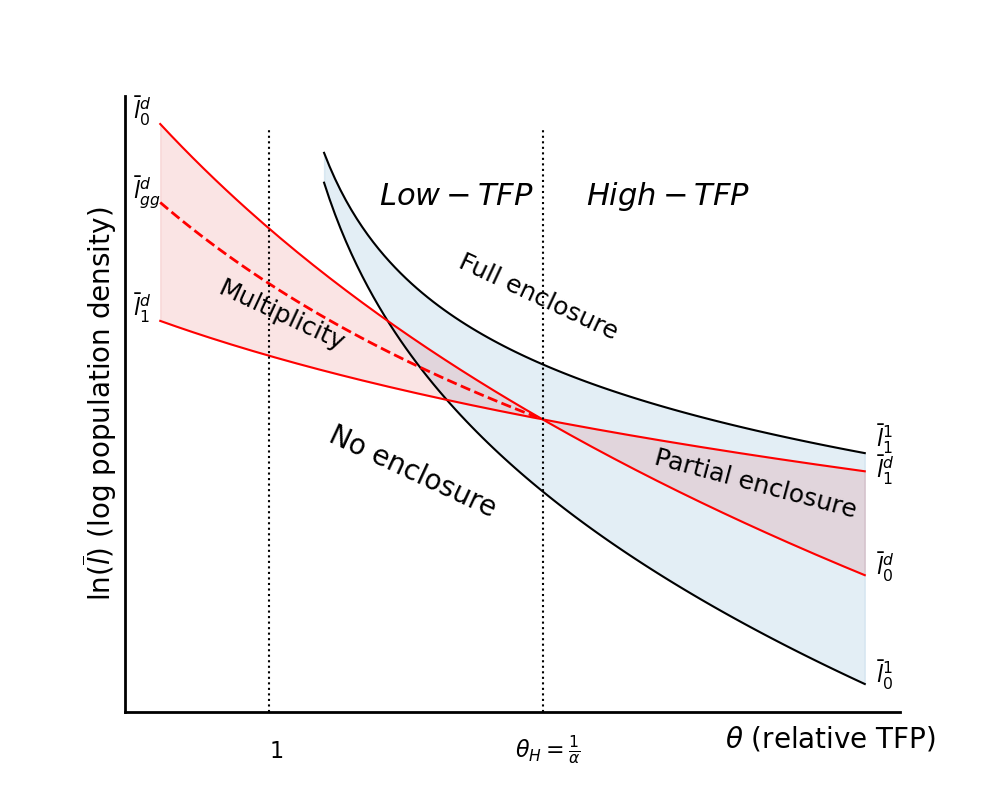}
\caption{Optimal and Decentralized Enclosures Compared. Red-shaded areas describe regions with partial private enclosure or multiple equilibria. Blue-shaded regions describe socially optimal partial enclosure.}
\label{fig_compare}
\end{figure}

Private and social equilibria can diverge in two ways. First, decentralized processes may lead to excessive enclosure. Using Global Games equilibrium selection, this occurs in the parameter region above the dashed $\bar l_{gg}^d$ locus and below the planner's $\bar l_0^1$ locus. Here, high population density drives privately captured rents above enclosure costs, even when net social benefits are negative. This inefficiency arises because (in the absence of compensation policies) private enclosers do not internalize the value lost by those they displace. Strategic complementarities imply that as some begin to enclose, the return to further enclosure rises. Alarmingly, this may precipitate sudden cascades or property races toward inefficient full enclosure even in economies where $\theta<1$ and enclosure leads to actual technical regress.

The decentralized economy may also under-provide enclosure relative to the social optimum. This occurs in economies to the right of $\theta_H$, above locus $\bar l_0^1$ but below locus $\bar l_0^d$. Despite high potential productivity gains, enclosers cannot appropriate enough of these gains to choose the socially efficient level of enclosure. In Section (\ref{prelim}) we identify the precise sources of these market failures.

\subsection{Are decentralized enclosures second-best?} \label{secondbest}

While we can identify where private and social enclosure rates diverge, evaluating these divergences requires a more nuanced policy benchmark than the first-best optimum. The decentralized economy faces labor misallocation problems that the first-best planner simply assumes away. A more policy-relevant question is whether a planner—who is also unable to regulate labor entry into unenclosed areas—could achieve higher net output by choosing different enclosure rates. 

Governments often control which land areas can be enclosed through settlement or zoning laws that delineate where land claims will be recognized and registered. However, these same governments may be unable to prohibit squatting or regulate access to unenclosed frontier regions and informal settlement areas.\footnote{U.S. land policy from colonial times through the nineteenth century repeatedly attempted to establish and maintain frontier lines, while struggling to prevent squatters from settling beyond official borders \citep[see e.g.][]{murtazashvili2013, alston2012}. Similar challenges with informal settlements and squatting have characterized urban development efforts from Lima to Dar es Salaam \citep[][]{desoto2000}.}

To establish the constrained (second-best) optimum, let $z_0(t_e)$ describe output per unit land achievable at different enclosure rates $t_e$, subject to the constraint that labor can still move freely into any remaining unenclosed land until equilibrium condition (\ref{commondecomp}) or (\ref{MPLS}) holds. We incorporate this constraint by substituting the labor reaction function $l^0(t_e)$ from (\ref{optle0}) into the planner's objective (\ref{fbmaxxer}):

\begin{equation}\label{zte0}
z_0(t_e) = \left [ \theta F(t_e, l_e^0(t_e)) + F(1-t_e, 1-l_e^0(t_e)) \right ] \cdot A \bar l^\alpha 
\end{equation}
After some simplification, the constrained planner's overall objective can be written using our parametric form as:
\begin{equation} \label{condmaxxer}
\begin{split}
z_{0}(t_e)-c \cdot t_{e} = \frac{\theta \Lambda_0^\alpha t_e+(1-t_e)}{(1+(\Lambda_0 - 1)t_e)^\alpha}
\cdot A \bar l ^{\alpha}
-c \cdot t_e
\end{split}
\end{equation}
This differs qualitatively from the unconstrained output function $z_1(t_e)$ in (\ref{zte}). While $z_0(t_e)=z_1(t_e)$ at the endpoints $t_e=0$ and $t_e=1$ where no labor misallocation occurs, $z_0(t_e)$ lies below $z_1(t_e)$ at all interior points dsince labor is misallocated relative to the first-best optimum.

The constrained-efficient enclosure policy regions follow from analyzing the properties of $z_0(t_e) - c \cdot t_e$. As before, the optimal solution depends on the shape of this objective function. Proposition \ref{sbestoe} characterizes the key results.

\begin{proposition}[Second-best Enclosure]\label{sbestoe} A constrained planner who chooses the optimal enclosure rate subject to the labor reaction rule $l_e^o(t_e)$ from (\ref{optle0}) faces a convex objective function in a low-TFP gain economy and a concave one in a high-TFP economy. In the low-TFP economy, the constrained planner chooses:
\begin{itemize}
    \item  \textbf{No enclosure:} $t^c_e=0$, when $z_0(1)-c\leq z_0(0)$ 
    \item  \textbf{Full enclosure:} $t^c_e=1$, when $z_0(1) - c \geq z_0(0)$
\end{itemize}
In the high-TFP economy, the constrained planner chooses:
\begin{itemize}
    \item \textbf{No enclosure:} $t^c_e=0$ when $z_0'(0)\leq c$
    \item \textbf{Full enclosure:} $t^c_e=1$ when $z_0'(1) \geq c$
    \item \textbf{Partial enclosure:} $t^c_e \in (0,1)$ when $z_0'(0) > c$ and $z_0'(1)<c$
\end{itemize} 
\end{proposition}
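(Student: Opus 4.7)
The plan is to reduce the entire proposition to a univariate shape analysis of $h(t_e) \equiv z_0(t_e) - c\,t_e$ on $[0,1]$. Once strict concavity is established in the high-TFP region and strict convexity in the low-TFP region, the characterizations listed in the proposition follow from routine maximization: a strictly concave $h$ has at most one interior stationary point, so comparing $h'(0)$ and $h'(1)$ with zero delivers the three high-TFP cases; a strictly convex $h$ attains its maximum on $[0,1]$ only at the endpoints, so in the low-TFP case the choice reduces to comparing $z_0(0)$ with $z_0(1)-c$.

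The central technical step is signing $z_0''(t_e)$. First I would substitute the reaction function $l_e^0(t_e) = \Lambda_0 t_e/D$ with $D = 1 + (\Lambda_0-1)t_e$ into (\ref{zte0}), exploiting the useful identity $1 - l_e^0(t_e) = (1-t_e)/D$. The Cobb-Douglas sums then collapse to the closed form already written in (\ref{condmaxxer}), which I would rewrite compactly as
\begin{equation*}
z_0(t_e) = A \bar l^\alpha \cdot (1 + \delta\, t_e)(1 + \gamma\, t_e)^{-\alpha}, \qquad \gamma \equiv \Lambda_0 - 1, \quad \delta \equiv \Lambda_0/\alpha - 1,
\end{equation*}
using the identity $\theta \Lambda_0^\alpha = \Lambda_0/\alpha$ that follows from $\Lambda_0 = (\alpha\theta)^{1/(1-\alpha)}$. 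Differentiating twice yields
\begin{equation*}
z_0''(t_e) = A \bar l^\alpha (1+\gamma t_e)^{-\alpha-2}\, Q(t_e),
\end{equation*}
where $Q(t_e)$ is affine in $t_e$. The aim is to show that $Q$ factors as $(\Lambda_0 - 1)$ times a quantity of uniform sign on $[0,1]$, so that $\mathrm{sgn}\, z_0''(t_e) = \mathrm{sgn}(1-\Lambda_0)$ throughout the interval. Because $\Lambda_0 > 1$ is equivalent to $\theta > 1/\alpha = \theta_H$, this ties the concavity/convexity dichotomy directly to the high-TFP/low-TFP dichotomy.

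The main obstacle is precisely this sign analysis of $Q(t_e)$. The difficulty is that the Euler-identity decomposition $z_0(t_e) = w(t_e)\bar l + r(t_e)\, t_e$ mixes a globally concave wage term $w(t_e)\bar l \propto (1+\gamma t_e)^{1-\alpha}$ with a rental-revenue term $r(t_e)\, t_e \propto t_e (1+\gamma t_e)^{-\alpha}$ whose curvature sign flips with $\gamma$. Consequently neither summand's curvature dominates \emph{a priori}, and the cancellations rely on the precise tie between $\delta$ and $\gamma$ imposed by the labor reaction function; the cleanest route is to compute $z_0''$ directly from the compact closed form above and then verify sign uniformity by checking endpoint values of the affine factor $Q$. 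Once $\mathrm{sgn}\, z_0''$ is pinned down, the remaining conclusions are immediate: apply first-order conditions and one-sided slope tests in the concave (high-TFP) case, and compare corner values in the convex (low-TFP) case.
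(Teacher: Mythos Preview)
Your plan---compute $z_0''$ in closed form and tie its sign to $\mathrm{sgn}(1-\Lambda_0)$---is a more explicit version of the paper's one-line heuristic (``quotient of a linear function and a concave function, hence convex/concave''). But when you actually carry out the endpoint check you propose, the factorization ``$Q=(\Lambda_0-1)\times(\text{uniform sign})$'' fails in part of the low-TFP region. Writing $Q(t_e)=-\alpha(1-\alpha)(\Lambda_0-1)\,B(t_e)$ with $B(t_e)=\Lambda_0(\alpha+2)/\alpha-1+\gamma\delta\,t_e$, one gets $B(1)=\Lambda_0(\Lambda_0+1)/\alpha>0$ always, but $B(0)=\Lambda_0(\alpha+2)/\alpha-1<0$ once $\Lambda_0<\alpha/(\alpha+2)$. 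In that range $z_0''$ changes sign on $[0,1]$ and the objective is \emph{not} convex; in the limit $\Lambda_0\to 0$ one has $z_0(t_e)\to A\bar l^{\alpha}(1-t_e)^{1-\alpha}$, which is strictly concave. The paper's heuristic glosses over the same issue.

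The low-TFP corner-solution conclusion survives, but you need one extra observation to close the gap. Whenever $\Lambda_0<\alpha/(\alpha+2)$ one also has $\Lambda_0<\alpha/(1+\alpha)$, which gives $z_0'(0)=A\bar l^{\alpha}(1-\alpha)[\Lambda_0(1+\alpha)-\alpha]/\alpha<0$ and hence $h'(0)<0$. Since $B$ is affine with $B(0)<0<B(1)$, $z_0''$ switches sign exactly once from negative to positive, so $h'$ is V-shaped with $h'(0)<0$; thus $h$ has no interior local maximum and the planner's optimum is still at an endpoint. With this patch the rest of your argument goes through. The high-TFP concavity claim is unaffected: there $\Lambda_0>1$ gives $B(0)>2/\alpha>0$ and $B$ increasing, so $z_0''<0$ throughout.
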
 
\begin{proof}[Proof:]
The shape of the planner's objective function $z_{0}(t_e)-ct_e$ in (\ref{zte0}) depends on $\Lambda_0$. From equation (\ref{condmaxxer}), when $\Lambda_0<1$ (low-TFP), the objective is a quotient of a linear function and a concave decreasing function, making it convex. When $\Lambda_0>1$ (high-TFP), it is a quotient of a linear function and a concave increasing function, making it concave. The optimal policy follows directly from these properties: in the convex case, we compare the function values at endpoints; in the concave case, we can find interior solutions by examining derivatives.
\end{proof}

\begin{figure}[htb!] 
\centering
\includegraphics[width=.9\textwidth]{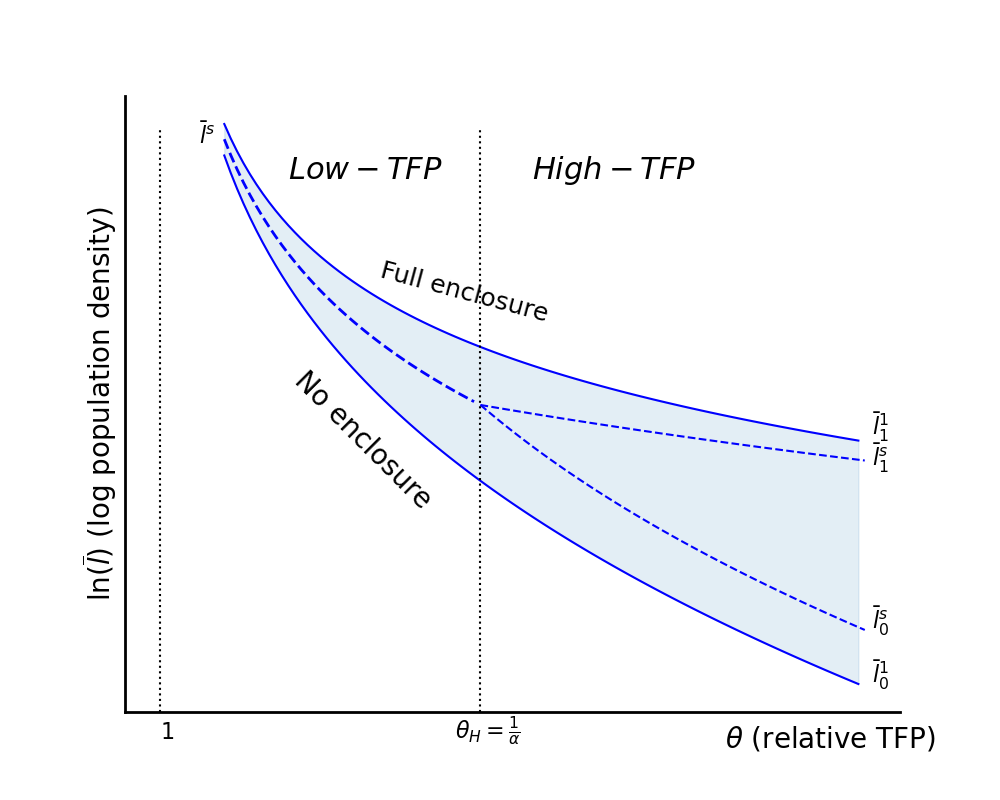}
\caption{First-best and second-best enclosure as functions of population density and expected TFP gain $\theta$.}
\label{fig-social-cond}
\end{figure}  

From Proposition \ref{sbestoe}, in the low-TFP economy, full enclosure yields higher net output when $z_0(1)-c \ge z_0(0)$. This occurs when:

\begin{equation} \label{cond_psconvex}
z_0(1)-c \ge z_0(0)\quad \Leftrightarrow \quad  \bar   l \geq \left[\frac{c}{A}\frac{1}{\theta-1}\right]^\frac{1}{\alpha}=\bar l^s
\end{equation}
where the superscript $s$ denotes second-best decision thresholds. 

In the high-TFP economy, the constrained planner's concave objective yields different thresholds. Enclosure becomes optimal when $z_0'(0)>c$, requiring:

\begin{equation} \label{ps0_locus}
z_0'(0)>c \quad \Leftrightarrow \quad  \bar  l \geq \left[\frac{c}{A}\frac{\alpha }{(\Lambda_0(1+\alpha)-\alpha)} \cdot \frac{1}{(1-\alpha)}\right]^\frac{1}{\alpha}=\bar l_0^s
\end{equation}
Full enclosure becomes optimal when $z_0'(1)>c$:

\begin{equation} \label{l1_star}
z_0'(1)>c \quad \Leftrightarrow \quad  \bar  l \geq \left[\frac{c}{\theta A} \cdot \frac{1}{(1-\alpha)}\right]^\frac{1}{\alpha}=\bar l_1^s
\end{equation}

Between these thresholds, partial enclosure is second-best optimal. Notably, the full enclosure threshold $\bar l_1^s$ in (\ref{l1_star}) coincides with the decentralized threshold $l_1^d$ in (\ref{fullenc_dec_rf}) for high-TFP economies (Proposition \ref{hightfpenc}).

Figure \ref{fig-social-cond} shows the boundaries of second-best enclosure regions (dashed blue lines $\bar l^s$, $\bar l_0^s$, and $\bar l_1^s$) superimposed on the first-best boundaries from Figure \ref{fig-social}. The constrained planner differs from the unconstrained planner in two ways. First, for economies above $\bar l_0^o$ but below $\bar l^s$ and $\bar l_0^s$, the constrained planner avoids initiating enclosures because the resulting labor misallocation would exceed productivity gains. Second, for economies above $\bar l^s$ and $\bar l_1^s$ but below $\bar l_1^o$, the constrained planner chooses full enclosure to eliminate costly labor misallocation, while the first-best planner maintains some unenclosed land.

These second-best benchmarks help identify where decentralized processes lead to inefficient outcomes that policy could improve. Figure \ref{figure6} overlays the constrained planner's decision regions on the decentralized economy's enclosure regions from Figure \ref{figure_private}, with first-best regions shown in lighter shading. Two types of inefficiency emerge. 

First, excessive private enclosure occurs in the red-hatched region (above $\bar l_{gg}^d$ and below $\bar l^s$), where decentralized processes lead to full enclosure despite higher net output under no enclosure.

Second, insufficient enclosure appears in the blue-hatched region (right of $\theta_H$, below $\bar l_1^s$, above $\bar l_0^s$). Within this region, the more finely hatched area between $\bar l_0^s$ and $\bar l_0^d$ represents a coordination failure—no enclosure occurs despite the potential for higher net output. These areas may benefit from policies that subsidize or stimulate enclosure.

\begin{figure}[htb!] 
\centering
\includegraphics[width=.9\textwidth]{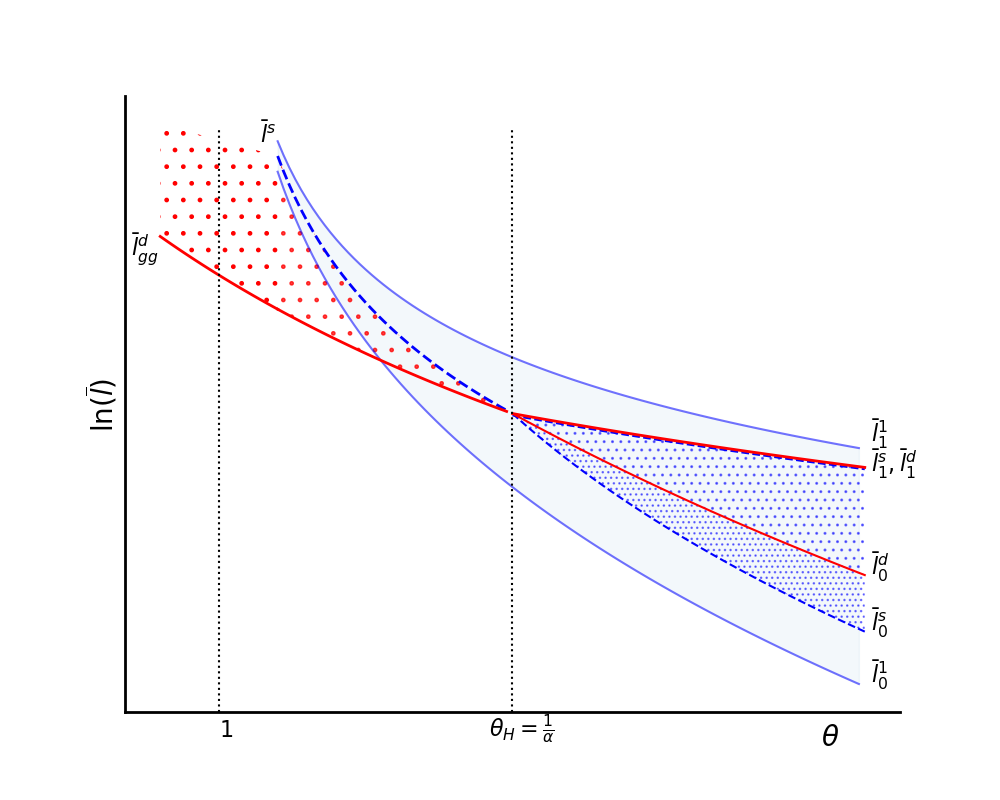}
\caption{Comparing decentralized and second-best outcomes. Red-hatched regions, where $\theta<\theta_H$, show excessive enclosure; blue-hatched regions, where $\theta>\theta_H$, show insufficient enclosure. }
\label{figure6}
\end{figure}  

This analysis reveals two distinct patterns of inefficiency. First, in low-TFP gain economies ($\theta < \theta_H$, below locus $\bar l_0^s$), small parameter changes can trigger sudden inefficient enclosure cascades. An economy just below the $\bar l_{gg}^d$ locus may be efficiently unenclosed, but a slight increase in population density $\bar l$, potential TFP gain $\theta$, or other rent-enhancing factors can precipitate a race to enclose—even when no enclosure remains first-best and second-best efficient. These inefficient property races are particularly stark when $\theta<1$, where enclosure actually reduces productivity. These equilibria represent a form of the `tragedy of the \textit{anticommons}' \citep{buchanan2000}, where excessive enforcement expenditures and private rent capture impede efficient resource use.  This generalizes earlier work on inefficient property races. \citet{anderson1990} identified the possibility of inefficient races to land, while \citet[][p.567]{demeza1992} described the potential for such `cusp catastrophes.' Our global games approach precisely identifies  parametric boundaries where such transitions occur. Similar inefficient races have been documented in patent races \citep{dasgupta1980}.

A qualitatively different inefficiency emerges in high-TFP economies ($\theta > \theta_H$). In the hatched region above $\bar l_0^s$ but below $\bar l_1^d$, coordination failures lead to suboptimal enclosure. Most strikingly, between $\bar l_0^s$ and $\bar l_0^d$, the decentralized economy fails to initiate any enclosure despite clear potential gains in net output.

\subsection{Sources of inefficiency} \label{prelim}

To understand the sources of inefficiency, consider the net social benefits of enclosure from a constrained planner's perspective. The derivative of the planner's second-best objective function, $z_0'(t_e) - c$, reveals two external effects that private enclosers fail to internalize:\footnote{Recall from (\ref{zte0}) that $z_0(t_e)=  [\theta F(t_e, l_e^0(t_e)) + F(1-t_e, 1-l_e^0(t_e))]\cdot A F(\bar T, \bar L)/\bar T$} 

\begin{align}\label{zprimeparts}
     \overbrace{\theta F_T^e A \bar f - c}^\text{net private benefit}
     \underbrace{-F_T^c A \bar f}_\text{external cost} 
     + \overbrace{(\theta F_L^e - F_L^c) A \bar f \cdot \frac{dl_e^0}{dt_e}}^\text{external benefit} 
\end{align}
Where $A \bar f = A F(\bar T, \bar L) / \bar T$ in (\ref{zprimeparts}) stands for potential output per unit of land with no land enclosure. 

Private enclosers consider only the first term—whether market returns exceed enclosure costs ($r-c$). They ignore two external effects. First, the encloser captures rents but without any obligation to pay compensation; enclosure imposes costs on displaced users who lose access to land rents $F_T^c A \bar f$ they previously captured—the `rent-shifting' effect. Second, when $\theta>1$, enclosure creates productivity benefits through reduced labor misallocation, captured by the term $(\theta F_L^e - F_L^c) A \bar f \cdot \frac{dl_e^0}{dt_e}$. This productivity gain occurs as higher-productivity enclosed land replaces lower-productivity open access use.\footnote{Though some labor will be displaced whenever $\theta<\theta_H$, for small marginal adjustments, this displaced labor will have the same marginal product $F_L^c A \bar f$ in the unenclosed sector while remaining workers on newly enclosed plots achieve a higher marginal product $F_L^e A \bar f$. Thus, enclosure increases overall labor productivity when $\theta>1$ but reduces it when $\theta<1$.}

The balance of these externalities determines equilibrium efficiency. When the net of these two external effects is negative, decentralized enclosure processes can lead to inefficiently high levels of enclosure. This occurs in economies in red-thatched regions in Figure \ref{figure6} below the high-TFP gain threshold. On the other hand, if the net of the two external effects is positive, the decentralized economy produces inefficiently low levels of land enclosure (in the blue thatched regions to the right of the high-TFP threshold).

\section{The extended model}\label{extended_model}

The benchmark model follows the literature in making two strong simplifying assumptions: (1) that unenclosed areas are characterized by unregulated open-access and (2) that when enclosures happen, they occur without compensation to existing or potential customary users. We now introduce two parameters, $\mu$ and $\tau$, that allow us to capture more realistic institutional arrangements from the diversity observed historically and across cultures \citep{berry1993, blaufarb2016}. Parameter $\mu \in [0,1]$ captures the degree to which users can regulate access to the commons, while $\tau \in [0,1]$ captures the extent to which stakeholders must be compensated for enclosure. We discuss the interpretation of each of these parameters in the next two subsections and then integrate them into the model to deliver new general equilibrium predictions.

\subsection{The regulated commons}\label{policy_regulate}

Real-world customary tenure regimes often differ markedly from pure open access. Communities may find ways to regulate access to resources through membership restrictions, land use rules, or access fees.\footnote{Access fees may be explicit charges or implicit costs, as when newcomers must prove themselves worthy of permanent rights.} Community members may also develop individualized plot rights that at least partly overcome the risks of use-it-or-lose-it possession, allowing them to maintain rights while pursuing outside opportunities. Local markets for exchange of land and labor among insiders may develop, and even conditioned transfers to outsiders\citep{shipton1992, boone2014, onoma2009}. 

At an equilibrium we can think of possession rights $\frac{T_c}{L_c}$ becoming relatively secure and stable. Given the linear homogeneous production technology, while individual households could be self-sufficient, we could also describe the situation as one where local factor markets emerge with wage $w_c=MP_L^c$ and land rental $r_c=MP_T^c$.\footnote{This provides an idealized benchmark. Local markets may face distortions from information asymmetries and property disputes \citep{chen2023, goldstein2008}.}

To capture some of this institutional complexity, we modify the equilibrium labor allocation condition (\ref{labmkteq2}) that governs entry and exit from the commons. With local factor markets we can substitute $w_c=MP_L^c$ and $r_c=MP_T^c$ and write:
 
\begin{equation*} 
w_e + \mu r_c \frac{T_c}{L_c} = w_c + r_c \frac{T_c}{L_c}
\end{equation*}
which can be rewritten as:
\begin{equation} \label{mu_move2}
w_e - w_c = (1-\mu) \cdot r_c \frac{T_c}{L_c}
\end{equation}
Parameter $\mu$ can be interpreted as an access fee for entering the commons or, equivalently, as a measure of customary rights' security and transferability. New entrants must pay for access, retaining only fraction $(1-\mu)$ of possession rents. For established rights-holders who leave for formal sector opportunities, the interpretation is symmetric: in pure open access ($\mu=0$), abandoning possession means losing all rents, while with $\mu>0$, leavers can retain (or cash out) fraction $\mu$ of their established rights.\footnote{For example, leavers might rent their customary land to insiders but face the risk that subtenants squat and refuse payment with probability $(1-\mu)$.}

Under either interpretation, expression (\ref{mu_move2}) shows that labor moves between sectors until the net capturable rents equal foregone opportunities. When $0\leq\mu<1$, imperfect regulation allows some labor crowding to persist. At $\mu=1$, access fees fully internalize entry externalities, equalizing marginal products. However, this perfect regulation differs from enclosure when $\theta\neq1$, as it neither transforms technology nor protects against displacement by potential enclosers.

\subsection{Power and Compensation}\label{power}

The benchmark model assumes enclosers are outsiders who can capture new land rents without compensating displaced users. While this might describe cases where settlers or elites dispossess locals through state violence, it excludes many important scenarios. Enclosers might be insiders with existing claims, outsiders required by state policy to compensate locals, or parties in contested property situations—as when landlords seek to convert customary tenant to commercial leases.

We introduce parameter $\tau \in [0,1]$ to measure required compensation to displaced stakeholders. For each enclosed unit of land, enclosers must pay $\tau r_c$ to existing users, where $r_c=MP_T^c=F_T^cA\bar f$ represents the land rent these users captured in the unenclosed sector (see equation \ref{zprimeparts}). The level of $\tau$ reflects groups' relative bargaining power and ability to resist displacement—what some might call the `balance of class forces'.

With compensation, enclosure is privately profitable when:

\begin{equation} \label{tau_enclose}
 (\theta F_T^e A \bar f - c) - \tau F_T^c A \bar f \ge 0 
\end{equation}

When $\tau>0$, enclosers are forced to internalize part or all of the external costs they impose on others as earlier identified in (\ref{zprimeparts}). This reduces the rent-seeking motive that led to excess enclosure. When $\tau=1$ (full compensation) and $\mu=1$ (full regulated access to the commons that eliminates labor misallocation), private incentives become aligned with social optimality.

The role of compensation connects to broader debates in economic history. Political Marxists like \citet{brenner1976, wood2002} emphasize how sudden shifts in class power—a sharp reduction in $\tau$ from a regime that protected peasant customary rights to one that allowed eviction and rack rents—drove England's agricultural transformation.  More generally, $\tau$ allows us to analyze various historical cases—from revolutionary France to modern land formalization programs—where the degree of required compensation shaped property regime transitions \citep{blaufarb2016}.

\subsection{Policy and the second-best} \label{policy_both}

With this extended version of the model, we can now analyze how varying degrees of commons regulation ($\mu$) and stakeholder protection ($\tau$) affect equilibrium outcomes. For labor allocation with regulated access ($\mu \in [0, 1]$), our parametric model yields:

\begin{equation} \label{mulabor}
l_e^\mu(t_e) = \frac{\Lambda_\mu t_e }{(1+(\Lambda_\mu-1)t_e)}   \quad \textrm{where}\quad\Lambda_{\mu}=\left(\frac{\alpha\theta}{1-\mu\cdot (1-\alpha)}\right)^\frac{1}{1-\alpha}
\end{equation}
The high-TFP threshold from Definition 1 is now modified to become:

\begin{equation}\label{mutheta}
   \theta_H^\mu=\frac{1}{\alpha} -\mu \cdot \frac{1-\alpha}{\alpha}
\end{equation}

As commons access becomes more regulated, labor misallocation decreases. As $\mu$ increases from 0 to 1, $\Lambda_\mu$ moves from $\Lambda_0$ in equation (\ref{optle0}) to the optimal $\Lambda_1$ from equation (\ref{mplfoc}), while the high-TFP threshold falls toward 1. For any given $\theta$, this threshold decreases monotonically with $\alpha$.

To analyze how incentives change with regulation ($\mu$) and compensation ($\tau$), we substitute $l_e^\mu(t_e)$ from (\ref{mulabor}) into (\ref{eqr_rf}):

\begin{equation} \label{reopt}
r^e_\mu(t_e)=\theta(1-\alpha) A\bar l^\alpha \cdot \left(\frac{\Lambda_{\mu}}{1+(\Lambda_{\mu}-1)t_e}\right)^\alpha 
\end{equation}
We can derive an analogous expression for $r_{\mu}^c(t_e)$ which determines the implied rental rate of land within the unenclosed sector. The encloser now encloses a unit of land only when

\begin{equation} \label{reoptminuss}
r^e_\mu(t_e)-\tau \cdot r_\mu^c(t_e)-c=(1-\alpha) A \bar l^\alpha \cdot \frac{\theta\Lambda^\alpha_\mu-\tau}{(1+(\Lambda_\mu-1)t_e)^\alpha } -c \ge 0
\end{equation}
Note that when an existing customary user considers enclosing their own plot, they effectively face $\tau=1$ since they must weigh the new enclosed rental value against the customary rents they currently capture through possession.

With these modifications, we can adjust Propositions 3 and 4 with appropriately modified boundary loci. Figure \ref{figure4x4} illustrates how these modifications affect equilibrium outcomes. The red-shaded regions show decentralized enclosure decisions under different combinations of $\mu$ and $\tau$, while blue-shaded regions show unchanged socially optimal decisions. Panel (a) shows our benchmark case: unregulated commons ($\mu=0$) with no compensation ($\tau=0$).

\begin{figure}[htb!] 
\centering
\includegraphics[width=\textwidth]{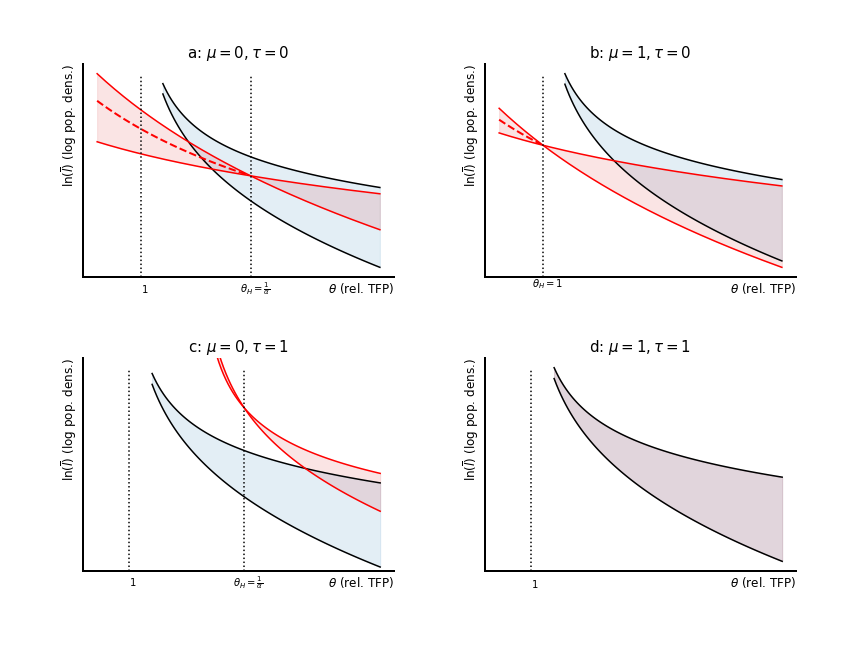}
\caption{Comparing $\mu$ and $\tau$ policy regimes. (a) Benchmark case with an unregulated commons and zero compensation to those displaced by enclosures; (b) A fully regulated commons but zero compensation;  (c) Unregulated commons but full compensation; and (d) Regulated commons and full compensation where private enclosure decisions now coincide with socially optimal ones.}
\label{figure4x4}
\end{figure}

From equation (\ref{reoptminuss}), enclosure returns increase with $\mu$ through higher $\Lambda_\mu$. Better commons regulation increases labor supply to the enclosed sector, lowering wages and raising enclosure returns since operators no longer need to pay high premiums to compensate labor for abandoning possession. Panel (b) shows this case of well-regulated commons ($\mu=1$) without compensation ($\tau=0$): while inefficient under-enclosure disappears, the range of inefficient over-enclosure expands. However, higher $\mu$ lowers the threshold $\theta_H^\mu$ below which enclosure decisions exhibit strategic complementarity, making catastrophic property cascades less likely in high-TFP economies (when $\mu=1$, $\theta_H^{\mu=1}=1$).

Panel (c) shows the opposite case: unregulated commons ($\mu=0$) but full compensation ($\tau=1$). The compensation requirement prevents rent-seeking inefficient enclosure but expands the range of inefficient under-enclosure. This creates an interesting tension—weaker property rights ($\tau<1$) may generate higher output through increased technological transformation, albeit at the cost of greater inequality. We see this in high-TFP economies that would efficiently fully enclose under $\tau=0$ (panel a) but fail to enclose under $\tau=1$ (panel c).

Finally, panel (d) shows the first-best case where both regulation and compensation are complete ($\mu=\tau=1$). Here, decentralized decisions align perfectly with social optimality. The contrast between panels (b) and (c) illustrates the theory of second-best: in a general equilibrium setting, reducing misallocation in one market may worsen outcomes if other markets remain distorted.

\section{Extensions and applications}\label{sec-extensions}

We now show how the extended model can be used to analyze four key political economy ande development topics: power shifts in contested claims (Section \ref{who_encloses}), labor release and wages (Section \ref{labor_release}), frontier land policy (Section \ref{sec-encompass}), and structural transformation (Section \ref{manuf_sec}).

\subsection{Power shifts and contested claims}\label{who_encloses}

In many historical contexts, landlords and tenants often contested land rents and eviction protections under customary tenure regimes. Such contestation emerged in many systems that evolved from manorial or colonial estates \citep{binswanger1995}. As serfdom declined in England after the fourteenth century, peasants acquired valuable customary rights including inheritance, protection from eviction, and limits on rent increases. While landlords sought to challenge these arrangements, peasants resisted, making property rights dependent on both local political conditions and economic factors like commercialization and population pressure \citep{hatcher2001,allen1992}.

Landlords and tenants contested the terms of customary rents and eviction protections. To model the effects of a shift in bargaining power, assume  that tenants initially pay only fraction $\mu$ of customary land's marginal return ($F_T^c A \bar f$). Through enclosure, landlords could evict tenants and charge the full market rent ($\theta F_T^e A \bar f$) to new or existing tenants. However, landlords must weigh these potential gains against three costs: direct enclosure costs ($c$), foregone customary rents ($\mu F_T^c A \bar f$), and compensation to displaced tenants ($\tau F_T^c A \bar f$, where $\tau \ge \mu$). Strong tenant protections are reflected in a high value of $\tau$.

A shift in political power favoring landlords—perhaps through capture of state institutions \citep{marx1992, brenner1976, wood2002}—might reduce required compensation $\tau$. This could make previously unprofitable enclosures viable, with consequences that depend on the economy's position relative to threshold $\theta_H^{\mu}$. In low TFP gain environments, reduced protection in the form of reduced compensation could trigger destructive enclosure races; in high TFP gain environments, the shift helps overcome coordination failures that had been blocking or limiting the extent of productive enclosure. Figure \ref{figure4x4} illustrates these transitions. When customary arrangements are weakly regulated ($\mu=0$), reducing compensation (from $\tau=1$ to $\tau=0$) moves the economy from panel c to panel a. However, when customary arrangements are well regulated ($\mu=1$), the same reduction in customary user protections (panel d to panel b) increases the range of environments where excessive and inefficient enclosure takes place.

This framework helps interpret competing historical accounts of English enclosures. \citet[][p. 937]{allen1982} has argued that while the consequences of enclosures remain debated, the ``the proximate cause the English enclosure has always been clear...[they were] invariably initiated by landowners because they expected their tenant farmers would pay higher rents after the Parish was enclosed.'' In his interpretation landlords initiated enclosures to capture higher rents after a century of observing productivity improvements in the open-field system. In our model this would be equivalent to increased $A$, which makes enclosure more profitable through reduced $c/A$. This interpretation aligns also with \citet[p. 77]{clark1998}, who suggests commons persisted because they were ``not very inefficient.'' These are however by no means agreed interpretations.  Recent empirical work by \citet{heldring2024} finds evidence for a causal relationship between Parliamentary enclosures and agricultural productivity and structural change, although like \citet{allen1992} they find evidence of labor displacement. \citet{lazuka2023a} point to similar causal evidence for Sweden.  

The U.S. frontier provides a contrasting example of how policy shaped enclosure parameters. The Preemption Acts of the 1830s and 1840s effectively increased compensation requirements ($\tau$) by recognizing squatters' rights and improvements, while claim clubs raised commons regulation ($\mu$) through local organization \citep{murtazashvili2013}. The U.S. government directly set enclosure costs ($c$) through residency requirements and land purchase prices. Through these and later Homestead Acts, policy created an ordered progression from squatting to formal ownership, though with starkly different compensation requirements—high $\tau$ for early settlers but effectively zero for displaced Native Americans \citep{carlos2022a}.

The frontier experience in the United States illustrates both successful and problematic property transitions. \citet{anderson2004, libecap1993, desoto2000} document how local communities developed efficient property arrangements adapted to their environments, often leading to orderly formalization. Technological changes, like the development of low-cost barbed wire (reducing $c$), led to demands to convert from common to private property (which had often been the preferred property regime by cattlemen's associations). However, top-down policies like the 1862 Homestead Act sometimes triggered inefficient outcomes—premature settlement, wasteful property races, and displacement of existing rights holders particularly at the cost of uncompensated Native American displacement \citep{carlos2022a}.

\subsection{Labor release and wages}\label{labor_release}

Earlier literature, notably \citet[p. 225]{weitzman1974}, argued that labor ``will always be better off with (inefficient) free access rights than under (efficient) private ownership,'' and highlight the labor displacement potential of enclosures.  Marxian interpretations of the English enclosure movements often stress the labor displacement \citep{tawney1912,humphries1990,cohen1975,perelman2000}. But other scholars question these interpretations \citep{whittle2013,shaw-taylor2001}. Our framework cannot settle what is ultimately an empirical debate, but it helps frame the theoretical debate by nesting differing interpretations and revealing the conditions under which this mechanism might be more likely to be at work.  Total labor earnings can be written as:

\begin{equation} \label{labinc}
Y_L = \alpha \cdot \theta \cdot A F(\bar T, \bar L)  \cdot 
\left ( \frac{1+(\Lambda_{\mu}-1)t_e}{\Lambda_{\mu}}  \right )^{1-\alpha}
\end{equation}
The distributional effects of enclosure depend critically on the economy's characteristics. In an unenclosed economy ($t_e=0$), labor captures all output ($Y_L = AF(\bar T, \bar L)$) through wages and possession rents. Under full enclosure ($t_e=1$), labor earns only the neoclassical share ($Y_L = \alpha \cdot \theta A F(\bar T, \bar L)$). While total output rises whenever $\theta>1$, labor only gains when $\theta>\theta_H^{\mu}$—when productivity gains sufficiently boost labor demand and wages to offset lost possession rents.

Other parameters shape these outcomes. Higher compensation requirements ($\tau$) reduce inefficient enclosures that would lower wages (Figure \ref{figure4x4}, panel c). Better commons regulation (higher $\mu$) lowers the threshold $\theta_H^{\mu}$ above which labor benefits, though it may increase enclosure pressure.

In a very different context, Mexico's 1990s ejido reforms illustrate these mechanisms. When ejido households could obtain ownership certificates, breaking the link between land use and rights, significant labor release followed. \citet{dejanvry2015a} found certified households became 28 percent more likely to have migrant members, with larger effects where property rights were initially less secure—consistent with our model's predictions for cases where $\theta<\theta_H$.

\subsection{Encompassing interests and frontier colonization}\label{sec-encompass}

Enclosure processes often involve actors with more unified and encompassing interests than atomistic decision-makers. Colonial governments and their agents, in particular, frequently controlled enclosure policy to advance specific objectives or group interests. In the early and mid 19th century Edward G. Wakefield's `systematic colonization' theory provided a framework for approaching frontier land policy that also proved influential in shaping classical political economy.

Wakefield developed an influential theory of colonial settlement that highlighted a fundamental tension in land-abundant frontiers: when laborers could easily claim independent holdings, wage-labor operations became unprofitable due to high wages and what he termed `inconstancy of labour' \citep{wakefield1849}. His proposed solution, which shaped British colonial policy, called for authorities to establish strong preemptive rights over `wastelands' and sell land at a `sufficient price' to slow laborers' transition to ownership, thereby ensuring concentrated settlements and a reliable, low-wage workforce. Karl Marx later seized on Wakefield's colonial analysis, arguing in the final chapter of Capital that it had inadvertently revealed a broader truth about capitalist production: that capital accumulation required deliberately limiting workers' access to independent livelihoods to create a dependent workforce \citep[][chapter 33]{marx1992}.

These policies had varying success. In settler colonies like South Australia and New Zealand, and the United States and Canada,  while displacing native peoples, organized squatter pressure prevented implementation of Wakefield's high `sufficient price.' However, more extreme versions of severely restricting smallholder access to create wage labor pools were implemented in parts of Sub-Saharan Africa and Latin America \citep{binswanger1993, solberg1969, legrand1984}.

To formalize these ideas about encompassing interests, consider a syndicate—like Wakefield's proposed colonization companies—that receives rights over unenclosed land. The syndicate acts as a monopolistic encloser, aiming to maximize returns from land enclosure through sales or rentals to competitive farm operators.\footnote{We abstract from potential labor monopsony effects that might arise if the encloser also became a major employer. See \citet{conning2007} for analysis of land monopoly-labor monopsony interactions and related literature.} For simplicity, we assume unenclosed areas remain under free access ($\mu=0$) and enclosers pay no compensation ($\tau=0$).

The syndicate chooses an enclosure rate $t_e$ to maximize land rents net of enclosure costs:
\begin{equation}\label{landrets}
    \pi(t_e)=r(t_e) \cdot t_e-c \cdot t_e 
\end{equation}
where $r(t_e)$ represents the market rental rate in the enclosed sector from (\ref{eqr_rf}). In low-TFP economies, $\pi(t_e)$ is convex, leading the monopolist to either fully enclose or not enclose at all. Full enclosure occurs when $\pi(1)>0$, or when:
\begin{equation} \label{pm0}
\pi(1)>0\quad\Leftrightarrow\quad\bar  l \geq \left[\frac{c}{\theta} \cdot \frac{1}{(1-\alpha)}\right]^\frac{1}{\alpha}
=\bar l^m
\end{equation}
The condition (\ref{pm0}) exactly matches the boundary condition $\bar l_1^d$ we saw earlier in the decentralized case in (\ref{fullenc_dec_rf}), as characterized in Proposition \ref{lowtfpenc}. However, while in the atomistic case, economies above this $l_1^d$ boundary but below the $l_0^d$ boundary of (\ref{noenc_dec_rf}) faced a situation where $r(1)>c>r(0)$, implying multiple equilibria at either $t_e=0$ or $t_e=1$. 
 The monopolist, however, internalizes these effects and will jump directly to the more profitable full enclosure allocation $t_e=1$ in a range of environments where atomistic enclosure processes would not, even though it is socially inefficient.  This aligns with Wakefield's vision: to curtail labor's access to unenclosed land to create an increase in labor supply that will reduce wages and raise land rents in the enclosed sector.

\begin{figure}[htb!] 
\centering
\includegraphics[width=.9\textwidth]{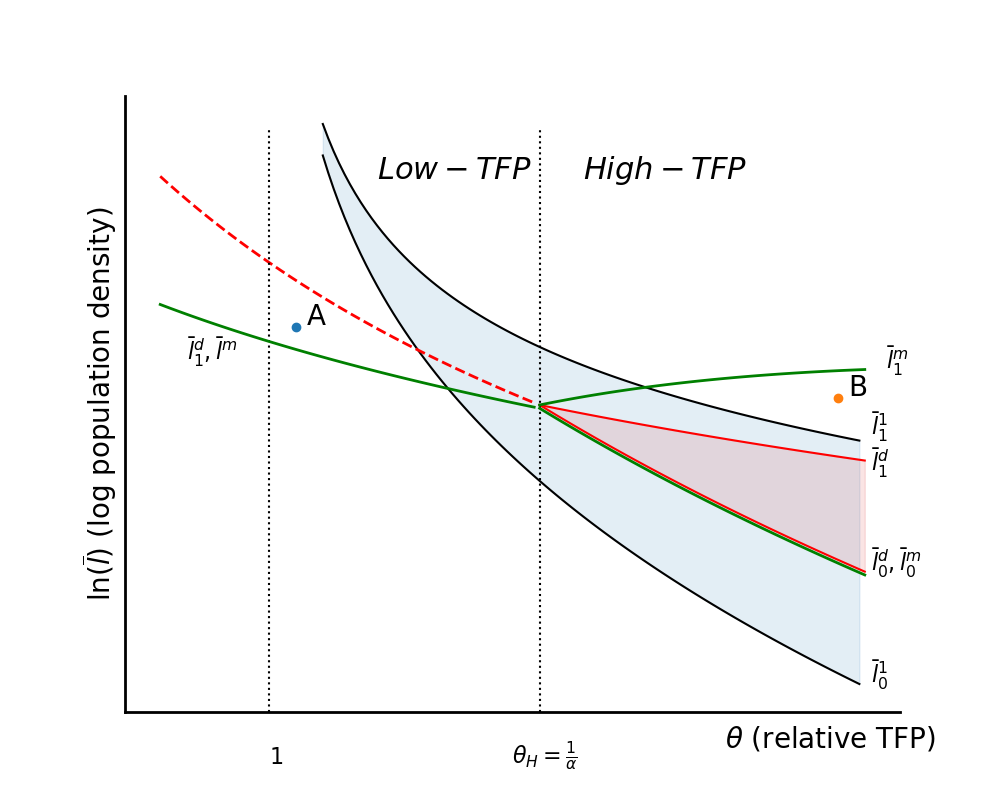}
\caption{Monopolistic versus competitive and optimal enclosure. The monopolist encloses fully above the thick green line in low-TFP regions, but chooses partial enclosure between $\bar l_0^m$ and $\bar l_1^m$ in high-TFP regions.}
\label{figure3}
\end{figure}
The monopolist's strategy differs markedly in high-TFP environments ($\theta \geq \theta_H$). Here, the monopoly encloser changes strategy and now holds land off the market in circumstances where the decentralized market and the social planner would enclose. The monopolist's choice depends on two conditions. First, enclosure begins when $\pi'(0)>0$, where:
\begin{equation} \label{landrets_d}
    \pi'(t_e) = r(t_e) +r'(t_e) t_e -c 
\end{equation}
At $t_e=0$, this yields the same threshold for starting enclosures as the decentralized economy ($\bar l_0^d$ from (\ref{noenc_dec_rf})) developed pursuant to Proposition \ref{hightfpenc}, which we reproduce and relabel here:
\begin{equation} \label{pm0_locus}
\pi'(0)'\geq 0\quad\Leftrightarrow\quad\overline{l} \geq \frac{1}{\Lambda_0}
\left[\frac{c}{\theta} \cdot
\frac{1}{(1-\alpha)}\right]^\frac{1}{\alpha}=\bar l_0^m
\end{equation}
in this region the monopolist begins to enclose at the same time as the decentralized economies do. However, as population density increases, the incentives of the monopolist diverge from those of decentralized enclosers because the monopolist internalizes the effect of increasing the enclosure rate $t_e$ on the rental rate $r(t_e)$. Full enclosure occurs only when $\pi'(1)\geq 0$:

\begin{equation} \label{pm1_locus}
\pi'(1)\geq 0\quad\Leftrightarrow\quad   \bar l\ \geq \left[\frac{\alpha c}{(1-\alpha)}\frac{\Lambda_0}{\Lambda_0 (1-\alpha)+\alpha}\right]^\frac{1}{\alpha}=\bar l_1^m
\end{equation}
Figure \ref{figure3} illustrates how monopolistic enclosure patterns differ from decentralized and socially optimal outcomes. Two distinct patterns emerge:

In low-TFP regions, a monopolist does not face strategic uncertainty about whether other parcels will also be enclosed, so it chooses to move to full enclosure in every situation where $r(0)<c$ yet $r(1)>c$. At a point such as $A$, the monopolist pushes to inefficiently enclose all land where the decentralized economy would not. This pattern aligns with Wakefield's vision of using land policy to force up labor supply.

In high-TFP regions ($\theta \geq\theta_H$), however, the monopolist behaves quite differently. At point $B$, it chooses only partial enclosure even when both decentralized processes and social planners would fully enclose. The monopolist restricts enclosure because further institutional and technological transformation in the unenclosed sector would raise wages, reducing land rents.  This fits Alain de Janvry's \citeyearpar{dejanvry1981} `functional dualism' argument for parts of Mexico and Latin America, where commercial farmers benefited from maintaining a low-productivity peasant sector with insecure property rights because it ensured a continued cheap labor supply.

\subsection{Structural transformation and manufacturing}\label{manuf_sec}

Historical interpretations of English enclosures' role in structural transformation remain contested. \citet{allen1992} characterizes both Marxist \citep{brenner1976} and `Tory triumphalist' historians \citep{chambers1953} as `agrarian fundamentalists' who saw customary institutions as inherently backward, requiring removal for agricultural modernization and labor release which they believe helped jump start the industrial revolution. Allen's \citeyearpar{allen1992} evidence from the South Midlands suggests instead that yeoman farmers achieved significant efficiency and productivity gains under the open field system, implying Parliamentary enclosures were primarily redistributive—helping landlords capture rising rents rather than driving efficiency improvements. While finding evidence of labor displacement, Allen argues this initially created rural underemployment rather than industrial labor supply. Other interpretations, including recent work by \citet{heldring2024}, however, suggests enclosures did contribute to both agricultural productivity and industrial growth.

We cannot settle these empirical debates, but our theoretical framework helps clarify how endogenous changes in property regimes shape structural transformation. Earlier models like \citet{cohen1975} and \citet{crafts2004}, and recent macro-misallocation papers \citep{chen2017a, chen2023, gottlieb2019}, study these issues but treat property rights transformation as exogenously driven. By modeling institutional change as endogenous, we can analyze both the conditions that trigger property regime transitions and the general equilibrium feedback effects that shape their timing and extent that these other approaches miss.

While a complete analysis of structural transformation lies beyond our scope, we can extend the model to understand how property regime transitions interact with its development. Adding a manufacturing sector that competes for labor, the labor market balance condition becomes:

\begin{equation}
l_e + l_c = 1 - l_m
\end{equation}
where $l_m$ is the share of total labor $\bar L$ employed in manufacturing.

Manufacturing employs constant returns technology $G(K,L)=A_m \cdot K^{1-\beta} L^\beta$, with productivity $A_m$ and sector-specific capital $\bar K$. With $L_m=l_m\cdot \bar L$, manufacturing output is:

\begin{equation*}
 G(\bar K, L_m)  =   l_m ^{\beta} \cdot G(\bar K, \bar L)
\end{equation*}

In this small open economy, the relative price of manufactures $p$ is set by world markets. Labor mobility equalizes returns across the three sectors:

\begin{equation*}
MP_L^m = 
p \cdot\beta \cdot \left ( \frac{1}{l_m} \right ) ^{1-\beta} \cdot  \bar k^{1-\beta}
\end{equation*}
The value average product of labor in the customary sector becomes:
\begin{equation*}
\begin{aligned}
AP_L^c  &=  A \left ( \frac{1-t_e}{(1-l_m)-l_e} \right ) ^{1-\alpha} \cdot  \bar t^{1-\alpha}
\end{aligned}
\end{equation*}
Labor mobility ensures no incentive to move across sectors: 
\begin{equation*} \label{maneq}
w = p \cdot MP_L^m = MP_L^e = AP_L^c>MP_L^c
\end{equation*}

Solving $MP_L^e = AP_L^c$ for $l_e$ yields a modified expression for enclosed sector labor demand:
\begin{equation}
l_e^0(t_e) = \frac{\Lambda_0 t_e }{(1+(\Lambda_0-1)t_e)} \cdot (1-l_m)  
\end{equation}
which adjusts our earlier result (\ref{optle0}). Other expressions, including the private return to enclosure $r(t_{e})$ in (\ref{eqr_rf}), follow similarly.

Manufacturing opportunities increase labor's opportunity cost to remaining in the customary sector.  We could call this a potential `pull' effect. There are also potential labor `push' effects. The elasticity of labor supply to manufacturing depends on the agricultural production parameters ($A, \alpha, \theta$) and policy environment ($p, \bar l, c, \mu, \tau$) which determine whether land will become enclosed and labor released.  This framework reveals how structural transformation might be constrained by failure to transform property relations or accelerated by inefficient enclosure and urban migration. The framework is thus able to generate, depending on initial conditions, either smooth efficient transformation, to get stuck at insufficient transformation with excess amounts of labor trapped in the customary, as well as potential premature or excess enclosure and structural transformation.

\section{Conclusions} \label{conclusion}

Our analysis provides a unified framework for understanding when and how property regimes transition from open access or customary arrangements to more exclusive forms. By incorporating technological change ($\theta$), enclosure costs ($c$), commons regulation ($\mu$), and compensation policies ($\tau$), we extend earlier theoretical work to reveal how these factors jointly determine equilibrium outcomes with endogenous transformation. The framework illuminates when property regime transitions promote efficiency and when they primarily redistribute rents, and under what conditions decentralized processes might lead to either coordination failures or inefficient property races.

Our framework offers insights for contemporary land formalization efforts. Recent decades have seen widespread attempts to rewrite national land laws and implement property mapping, registration, and titling programs \citep{aldenwily2018}. While driven partly by international organizations promoting market reforms and state capacity building \citep{desoto2000, deininger2003}, and enabled by new mapping technologies, these programs often reflect governments trying to catch up with transformations of customary regimes already underway.

As James C. \citet{scott1998} notes, local property arrangements often evolve independently of state plans, creating complex social forms that can be opaque to outsiders. While this opacity sometimes protects communities from predation, it may also create opportunities for local elites to block beneficial transitions or insert themselves to exploit them \citep{onoma2009}. Our model reveals how formalization efforts might either enhance efficiency or trigger destructive property scrambles, depending on local conditions. In low-density, low-productivity environments, hasty formalization (reducing $c$) might spark inefficient property races that displace existing rights holders. A more prudent approach would first recognize community ownership  and strengthen local governance capacity (align $\mu$ and $\tau$), in ways that may support eventual individual titling. 

National legislation has moved in this direction  by increasingly recognizing rural communities as collective owners and acknowledged customary law \citep{aldenwily2018}. This reflects lessons from past failed top-down formalization attempts and marks a significant shift from past practices where states either claimed trusteeship over land without protecting existing users or, worse, declared it \textit{terra nullius} to facilitate transfers to privileged groups. Current programs take a more balanced approach, working to identify both community and individual claims, resolve disputes and establish village boundaries and formalize their rights through documented titles, often before moving on to individualization efforts \citep{takeuchi2022,deininger2003}.

Our framework reveals critical policy tradeoffs in property regime transitions. In low-density, low-TFP environments, customary regimes may generate minimal inefficiencies relative to formalization costs. Here, well-intentioned efforts to reduce formalization costs ($c$) may be counter productive and risk triggering unecessary and destructive property scrambles. The wiser approach begins with recognizing community ownership and strengthening local governance before considering individual formalization.

Different challenges emerge in high-TFP environments ($\theta>\theta_H$), where coordination failures can block beneficial transitions. Arguably, this represents the reality in many developing countries, where the adoption of new technologies, crops, or larger investment projects might be impeded by the persistence of non-exclusive and non-transferable forms of land property ownership.  While reducing formalization costs might help overcome some of these failures, success still requires strong local governance and compensation mechanisms to protect existing rights holders.

These insights are particularly relevant as population pressure and economic growth strain traditional institutions \citep{holden2013,baland1996}. While many policymakers advocate privatization as a top-down solution our bottom-up equilibrium analysis, focused on decentralized processes and the strategic choices and reactions of both enclosers and those they might displace, reveals more nuanced understanding of different pathways and their efficiency and distributional consequences. In high-TFP economies, enclosure can increase labor intensity on private parcels, reducing pressure on commons. However, in low-TFP settings, each enclosure displaces labor onto remaining commons, potentially accelerating rather than arresting environmental degradation. Seemingly stable environments can suddenly tip into wasteful property races that concentrate ownership while crowding the displaced onto fragile lands.

This complexity reminds us that customary regimes are not inherently inefficient. Property institutions must evolve with their environment, but transitions require careful attention to local conditions and potential distributional consequences.

\newpage
%\bibliographystyle{apa}
%\bibliography{references}
\printbibliography

\end{document}